\documentclass[lettersize,journal]{IEEEtran}
\usepackage{amsmath,amsfonts}
\usepackage{algorithmicx}
\usepackage{algpseudocode}
\usepackage[linesnumbered,ruled,vlined]{algorithm2e}
\usepackage{array}
\usepackage{textcomp}
\usepackage{stfloats}
\usepackage{url}
\usepackage{verbatim}
\usepackage{graphicx}
\usepackage{subfigure} 
\usepackage{cite}
\usepackage{amssymb}

\usepackage{mathtools}
\usepackage{amsthm}
\usepackage{amstext}
\usepackage{mathrsfs}
\usepackage{multirow}
\usepackage{supertabular}
\usepackage{enumitem}
\usepackage{xcolor}
\usepackage{tabularx}
\usepackage{booktabs}
\usepackage{makecell}
\usepackage{float}  
\usepackage{threeparttable}
\usepackage{orcidlink}
\newtheorem{Theorem}{Theorem}
\newtheorem{Definition}{Definition}

\newtheorem{Lemma}{Lemma}
\allowdisplaybreaks
\hyphenation{op-tical net-works semi-conduc-tor IEEE-Xplore}

\begin{document}

\title{Petri Net Modeling and Deadlock-Free Scheduling of Attachable Heterogeneous AGV Systems}

\author{Boyu Li\,\orcidlink{0009-0004-4049-8955},~\IEEEmembership{Graduate Student Member,~IEEE,} Zhengchen Li\,\orcidlink{0009-0004-9515-3655},~\IEEEmembership{Graduate Student Member,~IEEE,} Weimin Wu\,\orcidlink{0000-0002-1958-1920},~\IEEEmembership{Senior Member,~IEEE,} and Mengchu Zhou\,\orcidlink{0000-0002-5408-8752},~\IEEEmembership{Fellow,~IEEE}
\thanks{Boyu Li, Zhengchen Li, and Weimin Wu are with the State Key Laboratory of Industrial Control Technology, Zhejiang University, Hangzhou 310027, China (e-mail: byli@zju.edu.cn; 12432031@zju.edu.cn; wmwu@iipc.zju.edu.cn).}
\thanks{MengChu Zhou is with the School of Information and Electronic Engineering, Zhejiang Gongshang University, Hangzhou 310018, China, and also with the Department of Electrical and Computer Engineering, New Jersey Institute of Technology, Newark, NJ 07102 USA (e-mail: mengchu@gmail.com).}
\thanks{This work has been submitted to the IEEE for possible publication. Copyright may be transferred without notice, after which this version may no longer be accessible.}
}


\maketitle

\begin{abstract}
	The increasing demand for flexible automation has accelerated the adoption of heterogeneous automated guided vehicles (AGVs).
	This work investigates a new scheduling problem in a material transportation system consisting of attachable heterogeneous AGVs, including carriers and shuttles, that flexibly attach and detach for cooperative task execution.
	While such collaboration enhances operational efficiency, the attachment-induced synchronization renders the system highly coupled and susceptible to deadlocks.
	To address this, we propose a Petri net (PN)-based deadlock-free scheduling framework integrated into an adaptive large neighborhood search (ALNS) algorithm.
	The PN is introduced to map candidate solutions from static permutations into dynamic collaborative processes, enabling performance evaluation via state evolution and proactive deadlock prevention through structural analysis.
	Extensive experiments on real-world and synthetic instances demonstrate that the proposed framework significantly improves computational efficiency, with the developed ALNS outperforming the current on-site policy, exact solvers, and state-of-the-art metaheuristics.
	Finally, sensitivity analysis yields managerial insights for optimal fleet sizing.

\end{abstract}

\def\abstractname{Note to Practitioners}
\begin{abstract}
	Motivated by a real-world heterogeneous AGV application, this paper addresses the practical scheduling challenge of a carrier–shuttle system, where two types of physically attachable AGVs with complementary functions collaborate to perform transportation tasks.
	However, such tight operational dependence imposes synchronization constraints, rendering the system highly susceptible to deadlocks and computationally challenging.
	To resolve this, a PN-enhanced deadlock-free ALNS framework is proposed.
	By explicitly modeling the coupled precedence relations and synchronization logic, the PN enables precise performance assessment through simulation and proactively prevents deadlocks via structural analysis.
	Experimental results demonstrate that the proposed deadlock-free framework significantly accelerates the metaheuristic search process, and the ALNS consistently yields superior scheduling outcomes compared with competing methods.
	Finally, our operational analyses identify the carrier as the primary bottleneck, yielding managerial insights for cost-effective fleet configuration.
	While developed for a specific carrier--shuttle system, the proposed framework is extensible to broader heterogeneous multi-robot collaboration with task-level synchronization requirements.
	Nevertheless, the proposed framework focuses on scheduling-level decisions under deterministic travel and operation times.
	Execution uncertainties, including motion control errors, communication delays, and AGV faults, are not explicitly considered.
	
\end{abstract}

\begin{IEEEkeywords}
	Heterogeneous AGV, synchronization, Petri net, deadlock, adaptive large neighborhood search.
\end{IEEEkeywords}

\section{Introduction}

Rapid advances in automation and robotics have accelerated the adoption of automated guided vehicles (AGVs) in logistics, manufacturing, and warehousing.
By automating material transportation, AGVs can improve productivity, reduce labor costs, and enhance operational safety.
With the increasing demand for full automation, heterogeneous AGVs attract significant attention from both academia and industry.
Inspired by the practical industrial context depicted in Fig. \ref{factory}, this work aims to study a new attachable heterogeneous AGV scheduling problem (AHASP) in a battery manufacturing system.

As shown in Fig. \ref{factory}, two types of heterogeneous AGVs with complementary capabilities collaborate to transport battery plates in a manufacturing system.
1) \textit{Shuttles} with material handling capabilities navigate confined production chambers to load and unload plates, and are restricted to local movement within chambers.
2) \textit{Carriers} with substantial load capacity and global transport capabilities enable the inter-chamber transfer of loaded or unloaded shuttles.
As illustrated in Fig.~\ref{workflow}, this cooperative workflow is structured into four sequential stages: \textit{Attach}→\textit{Retrieval}→\textit{Delivery}→\textit{Detach}. 
In this process, the carrier and shuttle attach to enable synchronized inter-station transport and detach to operate independently.
Therefore, each task necessitates synchronized collaboration between a carrier and a shuttle, with the carrier relying on the shuttle for material handling and the shuttle depending on the carrier for transfer.
This physical attachment is motivated by practical constraints in the targeted battery workshop, which involves transporting heavy battery plates and handling them within tightly confined chambers.
In this setting, an all-in-one AGV that integrates high-payload transport with intra-chamber handling is technically complex and costly, potentially compromising reliability.
By contrast, the carrier-shuttle system enables functional decoupling and parallel operations, offering a more cost-effective, efficient, and reliable solution.

Despite these advantages, attachment-induced collaboration poses unique and severe challenges for scheduling.
From a combinatorial optimization perspective, AHASP is NP-hard as it generalizes the classical Vehicle Routing Problem (VRP) \cite{dohn2011vehicle}.
However, beyond standard routing, AHASP necessitates the joint determination of carrier–shuttle pairing, task assignment, and AGV execution sequencing, which substantially expands the solution space.
To tackle this complexity, we employ an Adaptive Large Neighborhood Search (ALNS) framework, which has been proven effective in exploring the large and highly coupled search space \cite{voigt2024review}.
However, the application of such metaheuristics to AHASP entails distinct challenges regarding solution encoding-decoding and feasibility assurance.
Standard permutation-encoded solutions typically capture only independent routing sequences, failing to characterize the tight cross-schedule dependencies induced by synchronization.
Thus, a gap arises when decoding and realizing static solutions into the dynamic collaborative execution process for performance evaluation.
More critically, the resulting coupled schedules involve cascading precedence relations that are highly prone to circular waiting for resources, leading to deadlocked and thus infeasible solutions.
The frequent generation of such infeasible candidates wastes substantial computational effort and slows convergence of the metaheuristic search.

To address this, we introduce Petri nets (PNs) as a powerful formalism widely used to represent discrete-event systems with concurrency, synchronization, and sequentiality, all of which are inherent in the execution process of AHASP solutions.
Moreover, PNs provide a rigorous theoretical foundation for deadlock analysis and resolution.
Unlike traditional studies that primarily employ PNs for system-level modeling to design general control policies, we construct PNs from a novel solution-level modeling perspective to represent the candidate solutions generated by ALNS.
The PN model functions as a dynamic simulator to transform static permutation-based solutions into explicit execution processes, enabling precise performance evaluation based on state evolution.
Crucially, deadlocks are proactively prevented via PN structural analysis, ensuring deadlock-free schedules.
The main contributions of this paper are summarized as follows:

\begin{figure}[t]
	\centering
	\includegraphics[width=0.80\linewidth]{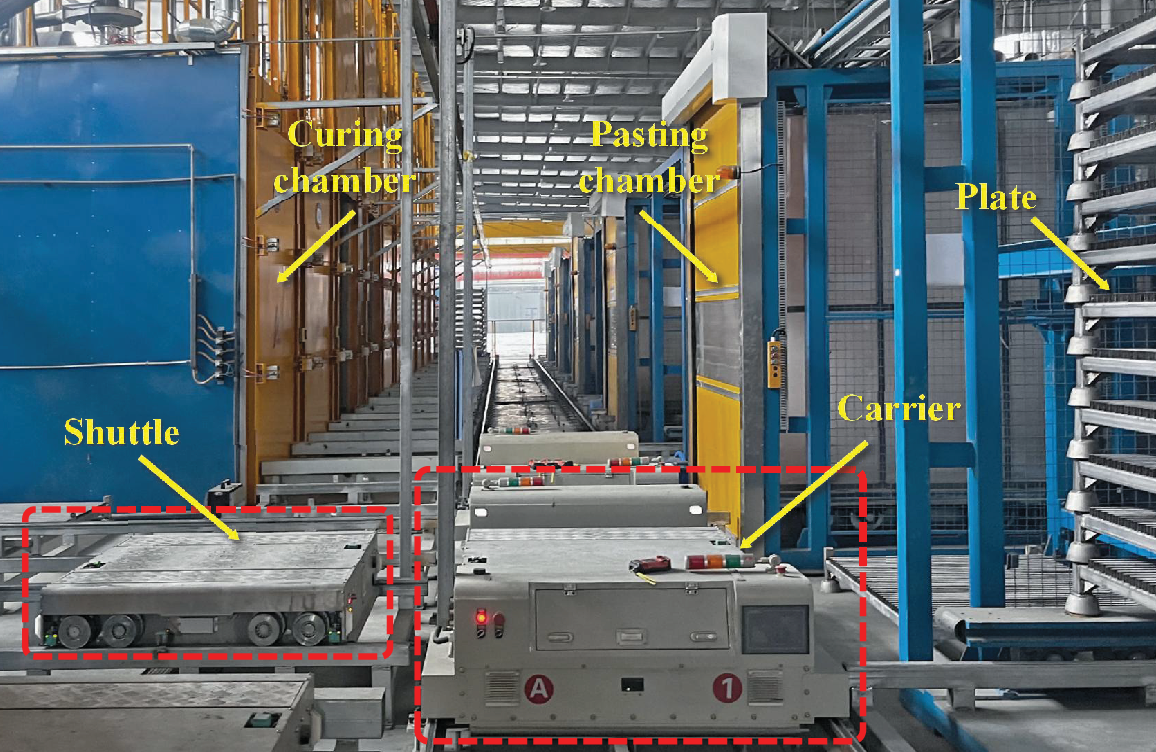}
	\caption{Battery manufacturing system in Jiangxi, China.}
	\label{factory}
	\vspace{-10pt}
\end{figure}

\begin{itemize}
	\item We abstract AHASP from real-world engineering scenarios and establish a mixed-integer linear programming (MILP) formulation to formally define the problem.

	\item We propose a PN-enhanced deadlock-free scheduling framework integrated into the ALNS algorithm.
	By constructing solution-level PNs to explicitly characterize coupled precedence relations and synchronization logic, we develop a Firing-Driven Decoding (FDD) method for precise evaluation and a Bidirectional Dependency Search (BDS) strategy for proactive deadlock prevention.

	\item We conduct extensive experiments on both real-world industrial and synthetic datasets.
	The results validate that the proposed PN-based deadlock-free framework significantly improves both computational efficiency and solution quality, and that the developed ALNS outperforms the on-site policy, two exact solvers, and four state-of-the-art metaheuristics.

\end{itemize}

The remainder of the paper is organized as follows.
Section \ref{Literature_review} reviews related work.
Section \ref{Problem description} formulates AHASP as a mathematical model.
Section \ref{Petri_feasibility} constructs the PN model and analyzes deadlocks.
Section \ref{Adaptive Large Neighborhood Search} develops the PN-enhanced metaheuristic.
Section \ref{Numerical Experiment} presents experimental results.
Section \ref{Conclusion} concludes this paper.

\section{Related Work} \label{Literature_review}

\subsection{Heterogeneous Robot Scheduling Problem (HRSP)}

Fu et al. \cite{fu2022robust} classify robot heterogeneity into two categories: structural (e.g., size, speed, and capacity) and functional (e.g., loading, transporting, and stacking).
A representative structural HRSP is the heterogeneous vehicle routing problem (HVRP), involving vehicles with varying costs and capacities \cite{pereira2024approach}.
Despite its various forms, structural heterogeneity is typically reflected through cost and capacity constraints \cite{Ba2026Multi}.
In the absence of collaboration and synchronization, HVRP tasks are scheduled independently, which resembles homogeneous MRTA.
The AHASP addressed in this paper involves functionally heterogeneous AGVs operating collaboratively with synchronization constraints.

\begin{figure}[t]
	\centering
	\includegraphics[width=0.85\linewidth]{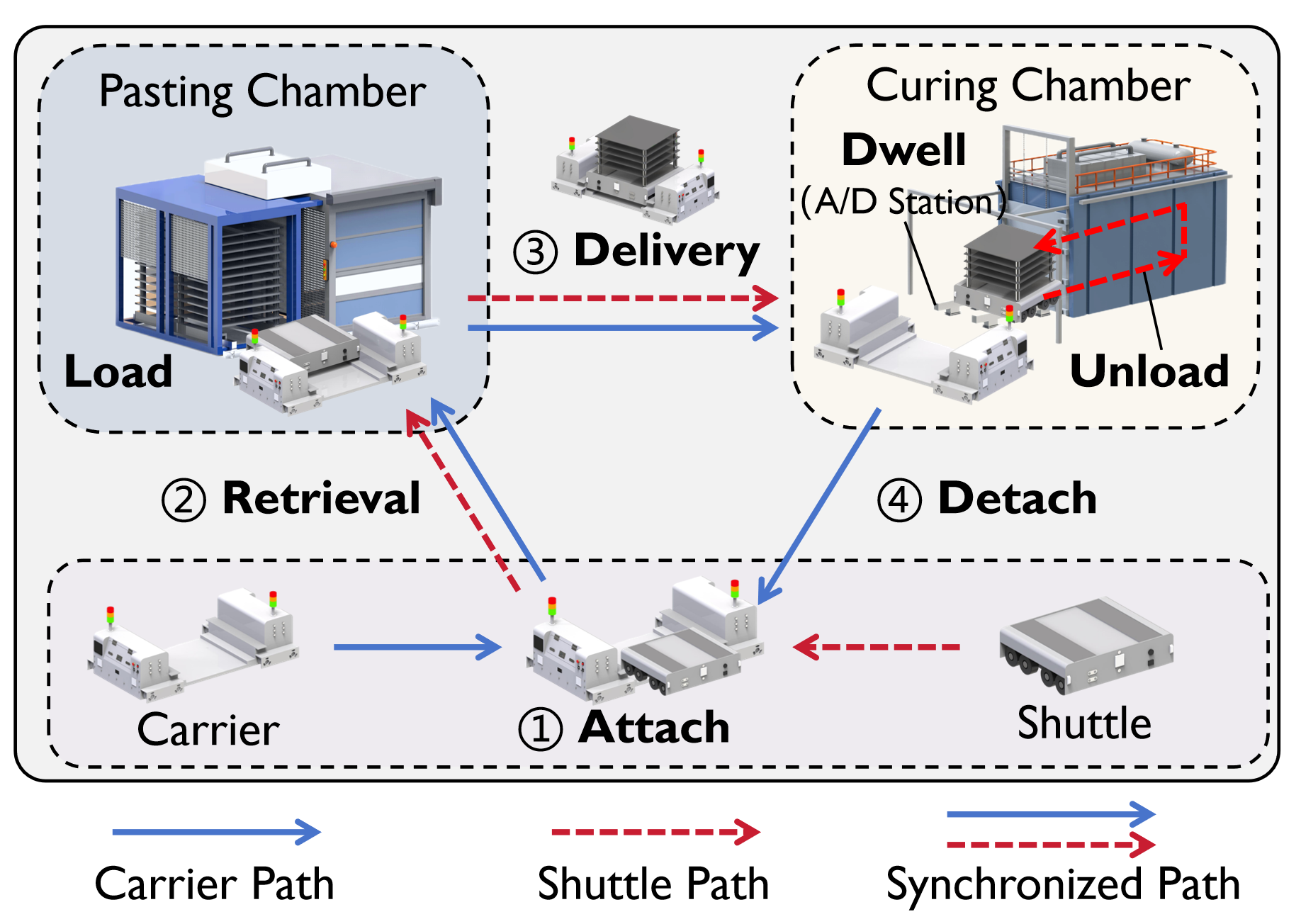}
	\caption{Cooperative carrier–shuttle workflow for task execution in an attachable AGV system.}
	\label{workflow}
	\vspace{-10pt}
\end{figure}

Functionally heterogeneous robots possess distinct yet complementary capabilities and cooperate to accomplish compound tasks \cite{martin2025collaborative,bai2019efficient,deng2024distributed,zuo2026self}.
A common example is the cooperation between forklifts and transport vehicles to transport and load cargo onto warehouse storage racks \cite{wang2024task,li2024efficient}.
This cooperation synchronizes their schedules and exhibits their interdependence whereby the utilization of one robot directly affects or depends on the scheduling of others \cite{gombolay2018fast}.
A widely adopted strategy for collaborative HRSP is to decompose compound tasks into elemental tasks with interrelated constraints.
Each elemental task is then assigned to a robot with the required functionality \cite{kamra2017combinatorial,chen2020yard,fu2022robust,wang2020coupled,ferreira2024distributed}.
Specifically, Chen et al. \cite{chen2020yard} investigate the integrated scheduling of yard cranes and AGVs, decomposing it into a set of crane-specific and vehicle-specific sub-tasks.
Ferreira et al. \cite{ferreira2024distributed} examine the scheduling of heterogeneous multi-robot teams, breaking down compound tasks into fine-grained action-level components.
Fu et al. \cite{fu2022robust} develop a stochastic programming framework for heterogeneous multi-agent systems that concurrently optimizes decomposition, task assignment, and scheduling.

Although AHASP involves functionally heterogeneous AGVs, it introduces a unique attachment-induced physical dependency between carriers and shuttles, which is absent in existing HRSP studies \cite{fu2022robust}.
This dependency renders existing decomposition strategies inapplicable to AHASP, as the compound task in AHASP cannot be decomposed into carrier-specific and shuttle-specific sub-tasks for separate assignment.
A similar attachment-based dependency is observed in unmanned air–ground vehicle (UAV/UGV) systems where UAVs depend on UGVs for transportation, charging, and deployment to conduct aerial surveillance \cite{wu2025bin,zhou2026heterogeneous}.
Existing research on UAV/UGV systems typically favors one side or partitions cooperative scheduling into UGV and UAV-level sub-problems, solved hierarchically \cite{li2021memetic,li2025two}.
While two-level scheduling reduces computational complexity, it weakens collaboration and sacrifices global optimality.
Coordinated scheduling for a non-decomposable carrier-shuttle system remains a new and challenging problem.

\subsection{Scheduling with Synchronization Constraints}

Synchronization constraints are prevalent in various cooperative scheduling domains where robots must collaborate simultaneously to execute tasks \cite{soares2024synchronisation}.
These constraints are first introduced by Eveborn et al. \cite{eveborn2006laps} within the context of staff scheduling in healthcare systems where two caregivers are required to serve a single patient simultaneously.
Bredstr{\"o}m et al. \cite{bredstrom2008combined} further extend synchronization constraints into temporal precedence and investigate synchronization issues in contexts such as forest operations and homecare staff scheduling.
Dohn et al. \cite{dohn2011vehicle} then generalize synchronization and precedence constraints as temporal dependencies.
Such synchronized scheduling problems appear in diverse real-world scenarios, including the delivery of medication, appliances, and electrical equipment \cite{erdogan2023synchronizing}, two-echelon distribution \cite{lehmann2024matheuristic}, staff scheduling \cite{sartori2022scheduling}, and cooperative multi-robot missions with task precedence relationships \cite{gosrich2025online}.
To model the synchronized precedence relations in these problems, two common approaches are precedence graph \cite{masson2013efficient,rist2024benders} and precedence matrix \cite{liu2019adaptive,wang2023adaptive}.
Specifically, Masson et al. \cite{masson2013efficient} first introduce the precedence graph, also referred to as the constraint graph \cite{rist2024benders}, which is widely used to identify inconsistencies.
Liu et al. \cite{liu2019adaptive} formulate dependencies using a precedence matrix and utilize it to prevent cross-synchronization.
Wang et al. \cite{wang2023adaptive} combine the precedence graph and matrix to assess both time-related and cycle-related feasibility.
However, these approaches rely on static and stateless representations, failing to capture the dynamic state evolution and synchronization behaviors arising during schedule execution.

In contrast, PNs provide a powerful formalism for modeling discrete-event systems by extending static precedence through markings, enabling the explicit representation of system dynamics involving concurrency, synchronization, and sequences \cite{david1994petri}.
Leveraging these capabilities, PNs have been widely applied to schedule resource-constrained operations and resolve deadlock issues \cite{yi2024heuristic, Lee2025Optimal, shi2026Allocation, li2024State}.
Despite their strong suitability, the application of PNs to scheduling heterogeneous robots with synchronization constraints remains limited.
Moreover, most existing PN-based scheduling studies focus on modeling overall system behavior to derive optimal schedules.
In contrast, from a novel solution-level modeling perspective, this work constructs PNs to capture the dynamic execution of coupled schedules and achieve deadlock-free scheduling for AHASP.

\section{Problem description and formulation}\label{Problem description}
\subsection{Problem Description} \label{introduce flex}
\begin{figure}[ht]
	\centering
	\includegraphics[width=1\linewidth]{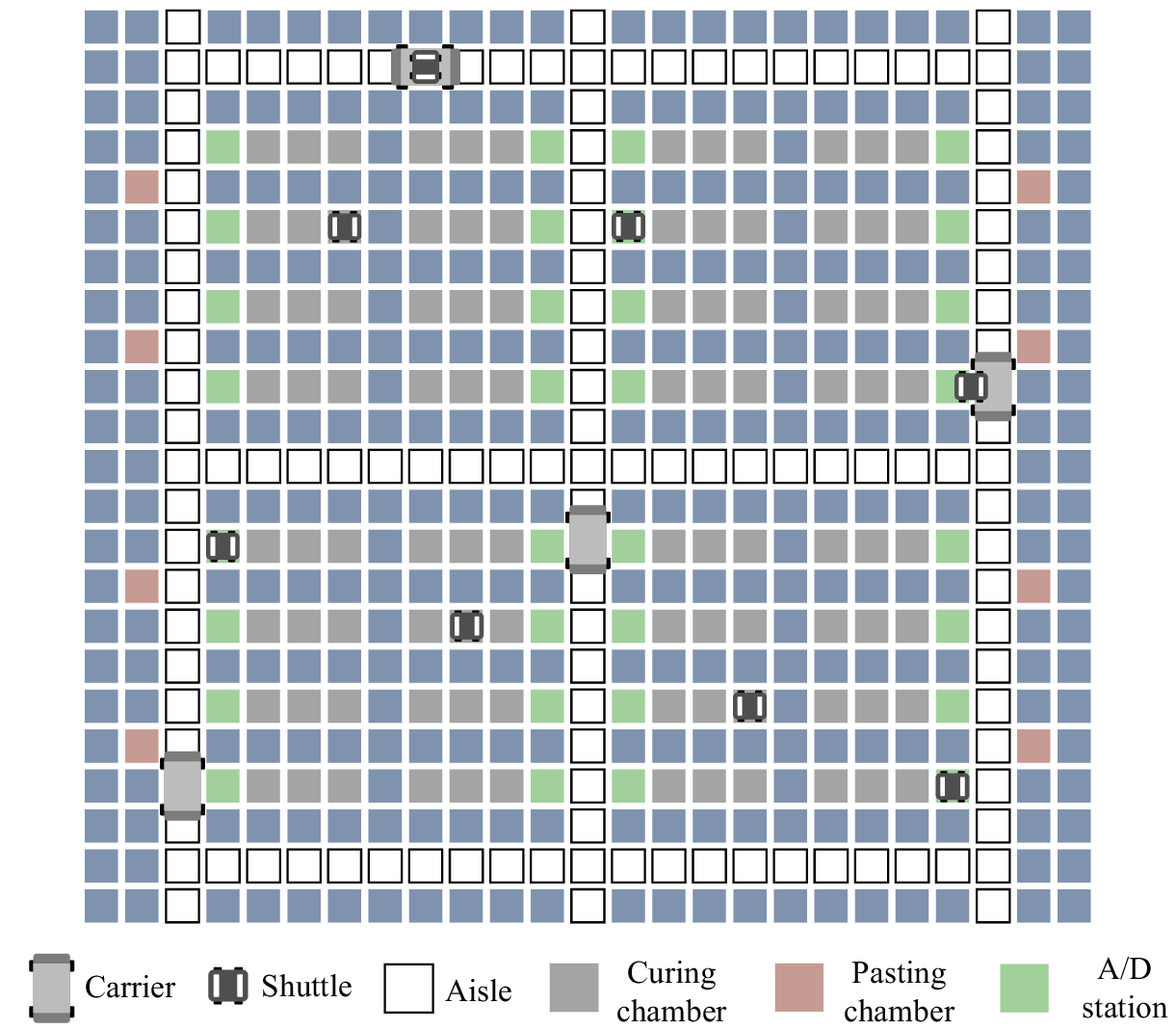}
	\caption{Layout of the battery manufacturing system.}
	\label{layout}
	\vspace{-10pt}
\end{figure}

\begin{figure}[ht]
	\centering
	\includegraphics[width=1\linewidth]{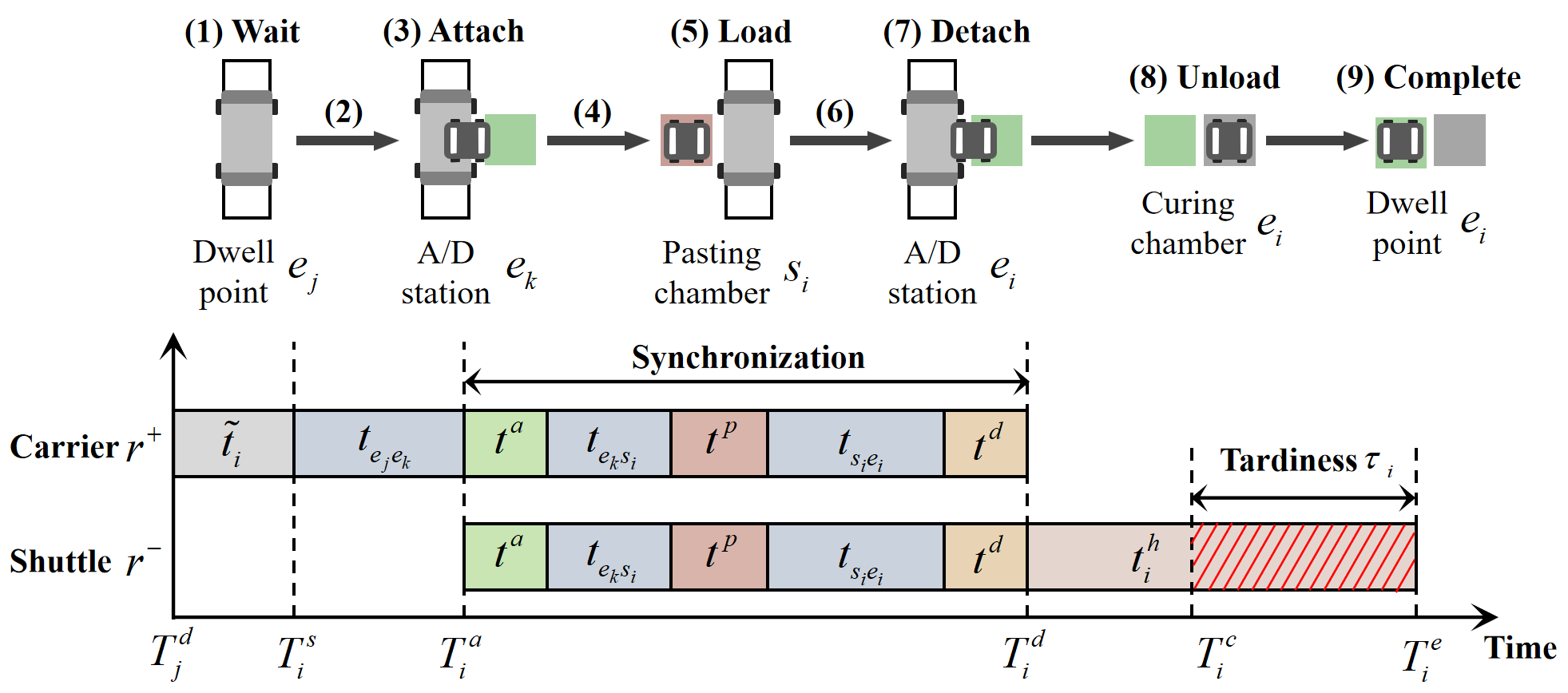}
	\caption{Gantt chart illustrating the synchronized execution of task $i$ by carrier $r^+$ and shuttle $r^-$.}
	\label{execution_process}
	\vspace{-10pt}
\end{figure}

The layout of the battery manufacturing system is shown in Fig. \ref{layout}.
Battery plates are pasted in pasting chambers and then transported to curing chambers via carriers and shuttles.
Each task $i$ involves a pasting chamber as source position $s_i$ and a curing chamber as destination position $e_i$, requiring the cooperation between a carrier $r^+$ and a shuttle $r^-$.
Let tasks $j$ and $k$ be the preceding tasks of carrier $r^+$ and shuttle $r^-$, respectively, before executing task $i$.
The execution process of task $i$ is shown in Fig. \ref{execution_process} and detailed as follows.
\begin{enumerate}
	\item Upon completing task $j$ at time $T_j^\textnormal{d}$, carrier $r^+$ dwells at position $e_j$ for duration $\tilde{t}_i$ to synchronize with the availability of shuttle $r^-$;
	\item Carrier $r^+$ starts task $i$ at time $T_i^\textnormal{s}$, moving from position $e_j$ to the location of shuttle $r^-$, i.e., $e_k$;
	\item Shuttle $r^-$ begins attaching to carrier $r^+$ at time $T_i^\textnormal{a}$;
	\item Carrier $r^+$ transfers shuttle $r^-$ to pasting chamber $s_i$;
	\item Shuttle $r^-$ loads pasted plates;
	\item Carrier $r^+$ transfers shuttle $r^-$ to curing chamber $e_i$;
	\item Shuttle $r^-$ detaches from carrier $r^+$, and carrier $r^+$ completes task $i$ at time $T_i^\textnormal{d}$;
	\item Shuttle $r^-$ unloads plates in curing chamber $e_i$;
	\item Shuttle $r^-$ completes task $i$ and dwells at the attach/detach (A/D) station of $e_i$ at time $T_i^\textnormal{e}$.
\end{enumerate}

As shown in Fig. \ref{execution_process}, carrier $r^+$ and shuttle $r^-$ initiate and complete task $i$ asynchronously, but operate synchronously during the attachment period. 
To ensure plate quality, each task $i$ has a designated curing due time $T_i^\textnormal{c}$, and ideally the task completion time $T_i^\textnormal{e}$ should be earlier than $T_i^\textnormal{c}$.
However, due to limited transport capacity and physical travel constraints, satisfying all deadlines is not always feasible, particularly under heavy workloads.
Consequently, a penalty cost is incurred for the tardiness (i.e., $\max(0, T_i^\textnormal{e} - T_i^\textnormal{c})$), as indicated by the red slanted bar in Fig. \ref{execution_process}.
The paths of carriers are planned by using the A$^*$ algorithm, with the travel distance contributing to the overall cost.
The travel time $t_{a, b}$ between two positions $a$ and $b$ is calculated as $t_{a, b} = d_{a, b} / v$, where $d_{a, b}$ represents the path distance and $v$ is the constant velocity of carriers.
Given the shuttle’s limited intra-chamber movement, its travel cost is considered negligible.
To establish a clear baseline and focus on the core synchronization constraints and deadlock prevention, this work assumes deterministic travel time and fault-free AGVs as a baseline setting.

\subsection{Mathematical Formulation} \label{Mathematical_formulation}

\begin{table}[h]
	\centering
	\small
	\begin{tabularx}{\textwidth}{c X}
		\multicolumn{2}{l}{\textbf{Parameters}}\\
		$\mathcal{N}$       & Set of tasks \\
		$\mathcal{N}_o$     & Set of virtual tasks \\
		$\mathcal{N'}$ & Set of nodes \\
		$\mathcal{V}$       & Set of all AGVs \\
		$\mathcal{V}^+$     & Set of carriers \\
		$\mathcal{V}^-$     & Set of shuttles \\
		$n$       & Number of tasks \\
		$n_r$     & Number of tasks assigned to AGV $r$ \\
		$m^+$     & Number of carriers \\
		$m^-$     & Number of shuttles \\
		$o_r$     & Virtual task of AGV $r$ \\
		$s_i$     & Source position of task $i$ \\
		$e_i$     & Destination position of task $i$ \\
		$v$       & Velocity of AGVs \\
		
		$t^\textnormal{a}$      & Attachment duration \\
		$t^\textnormal{d}$      & Detachment duration \\
		$t^\textnormal{p}$      & Load duration \\
		$t_i^\textnormal{h}$    & Unload duration of task $i$ \\
		
		$T_i^\textnormal{s}$    & Start time of task $i$ \\
		$T_i^\textnormal{a}$    & Attach time of task $i$ \\
		$T_i^\textnormal{c}$    & Designated curing due time of task $i$ \\ 
		$\lambda$    & Cost weight factor \\
		$L$          & A sufficiently large positive value \\
	\end{tabularx}
	\vspace{-15pt}
\end{table}


\begin{table}[h]
	\centering
	\small
	\begin{tabularx}{\textwidth}{c X}
		\multicolumn{2}{l}{\textbf{Decision variables}} \\
		$x_{ij}^r$       & \makecell[l]{$x_{ij}^r = 1$ if edge $(i,j)$ is assigned to AGV $r$ and \\ 0 otherwise} \\ 
		$\tilde{t}_i$  & Carrier's synchronization dwell time for task $i$ \\ 
		$T_i^\textnormal{d}$    & Detach time of task $i$ \\
		$T_i^\textnormal{e}$    & Completion time of task $i$ \\
		$\delta_i$   & Travel distance of task $i$ \\
		$\tau_i$     & Tardiness of task $i$ \\
	\end{tabularx}
\end{table}

AHASP is represented as a directed graph $\mathcal{G}=\{ \mathcal{N'}, \mathcal{E}\}$, where $\mathcal{N'} = \mathcal{N}_o \cup \mathcal{N}$ is the set of nodes and $\mathcal{E} = \{(i,j) \mid i,j \in \mathcal{N'}, i \neq j\}$ is the set of edges.
The set $\mathcal{N}_o = \{o_r \mid r \in \mathcal{V}\}$ defines a virtual depot $o_r$ for each AGV $r$, from which its task sequence starts and ends, with $e_{o_r}$ denoting its initial position.
The nodes in $\mathcal{N}= \{1,2,\cdots,n\}$ correspond to transportation tasks.
AHASP entails assigning each task in $\mathcal{N}$ to a carrier in $\mathcal{V}^+ = \{1, 2, \cdots, m^+\}$ and a shuttle in $\mathcal{V}^- = \{m^+ + 1, m^+ + 2, \cdots, m^+ + m^-\}$, and determining the execution sequence for each AGV in $\mathcal{V} = \mathcal{V}^+ \cup \mathcal{V}^-$.
The objective is to minimize the weighted sum of travel distance and tardiness.
The MILP model for AHASP is formulated as follows to provide a mathematical definition of the problem.

\begin{gather}
	\min F= \lambda \cdot \sum_{i \in \mathcal{N}} \delta_i + (1 - \lambda) \cdot \sum_{i \in \mathcal{N}} \tau_i \label{aim}
\end{gather}
s.t.:
\begin{gather}
	\sum_{r \in \mathcal{V}^+}\sum_{i \in \mathcal{N'}} x_{ij}^r = 1, \forall j \in \mathcal{N} \label{c1_1} \\
	\sum_{r \in \mathcal{V}^-}\sum_{i \in \mathcal{N'}} x_{ij}^r = 1, \forall j \in \mathcal{N} \label{c1_2} \\
	\sum_{r \in \mathcal{V}^+}\sum_{j \in \mathcal{N'}} x_{ij}^r = 1, \forall i \in \mathcal{N} \label{c1_3} \\
	\sum_{r \in \mathcal{V}^-}\sum_{j \in \mathcal{N'}} x_{ij}^r = 1, \forall i \in \mathcal{N} \label{c1_4} \\
	\sum_{i \in \mathcal{N'}} x_{ij}^r - \sum_{i \in \mathcal{N'}} x_{ji}^r = 0, \forall j \in \mathcal{N}, r \in \mathcal{V} \label{c2}  \\
	\sum_{i\in \mathcal{N'}} x_{io_r}^r = \sum_{i\in \mathcal{N'}} x_{o_r i}^r = 1, r \in \mathcal{V}  \label{c3}\\ 
	d_{e_{j}, e_{k}} + d_{e_{k} ,s_i} + d_{s_i, e_i} - L(2 - x_{j i}^{r^+} - x_{k i}^{r^-}) \leq \delta_i, \nonumber \\ 
	\forall j ,k \in \mathcal{N'}, i \in \mathcal{N} , r^+ \in \mathcal{V}^+, r^- \in \mathcal{V}^- \label{distance_task} \\
	T_k^\textnormal{e} - T_j^\textnormal{d} - t_{e_{j}, e_{k}} - L (2 - x_{j i}^{r^+} - x_{k i}^{r^-}) \leq \tilde{t}_i, \nonumber \\ 
	\forall j,k \in \mathcal{N'}, i \in \mathcal{N}, r^+ \in \mathcal{V}^+, r^- \in \mathcal{V}^- \label{idle_1} \\
	T_j^\textnormal{d} + \tilde{t}_i + t_{e_{j}, e_{k}} + t^\textnormal{a} + t_{e_{k}, s_{i}} + t^\textnormal{p} + t_{s_i, e_i} + t^\textnormal{d} - L (2 - x_{j i}^{r^+} - \nonumber \\ 
	x_{k i}^{r^-}) \leq T_i^\textnormal{d}, \forall j,k \in \mathcal{N'}, i \in \mathcal{N}, r^+ \in \mathcal{V}^+, r^- \in \mathcal{V}^- \label{s_e} \\
	T_i^\textnormal{d} + t_i^\textnormal{h} \leq T_i^\textnormal{e}, \forall i \in \mathcal{N}  \label{e_d} \\
	T_i^\textnormal{e} - T_i^\textnormal{c} \leq \tau_i, \forall i \in \mathcal{N}  \label{tardiness_1} \\
	x_{ij}^r = 0 , \forall i,j \in \mathcal{N}, r \in \mathcal{V},i = j \label{c11} \\
	\tilde{t}_i, \tau_i, T_i^\textnormal{d}, T_i^\textnormal{e} \geq 0, \forall i \in \mathcal{N'} \label{c5_1} \\
	x_{ij}^r \in \lbrace 0,1\rbrace, \forall i,j \in \mathcal{N'}, r \in \mathcal{V} \label{c10} 
\end{gather}

The objective function is defined in \eqref{aim}.
Constraints \eqref{c1_1}-\eqref{c1_4} guarantee the assignment of each task to one carrier and one shuttle.
Constraint \eqref{c2} enforces flow conservation.
Constraint \eqref{c3} ensures that the task sequence of AGV $r$ starts and ends at virtual task $o_r$.
Constraint \eqref{distance_task} determines the travel distance of each task.
Constraint \eqref{idle_1} determines the carrier's dwell time to synchronize its arrival with the shuttle's availability.
Constraint \eqref{s_e} defines the relationship between the detachment times of two consecutive tasks.
Constraint \eqref{e_d} establishes the relationship between the detachment time and completion time of each task.
Constraint \eqref{tardiness_1} determines the tardiness of each task.
Constraint \eqref{c11} eliminates self-visits.
Constraints \eqref{c5_1} and \eqref{c10} define the feasible domains of decision variables.

\section{Petri Net Modeling and Deadlock Analysis} \label{Petri_feasibility}

Given the NP-hard nature of AHASP, the MILP formulation is primarily used for problem definition and small-scale validation, as exact MILP optimization is computationally prohibitive for industry-scale instances.
To tackle this computational complexity, we employ the ALNS framework to efficiently obtain high-quality near-optimal solutions.
However, applying ALNS to the tightly coupled AHASP under synchronization constraints is nontrivial in terms of solution encoding-decoding and feasibility enforcement.
Specifically, standard permutation-encoded solutions fail to characterize the cross-schedule dependencies and synchronization logic, thereby preventing the direct decoding of static sequences into the dynamic collaborative execution process for performance evaluation.
More critically, the resulting coupled schedules involve cascading precedence relations that are highly prone to deadlocks, leading to infeasible solutions.

To address these challenges, we introduce a PN-based deadlock-free scheduling framework integrated into ALNS to explicitly model candidate solutions, thereby enabling precise performance evaluation and steering the search towards feasible regions.
The overall architecture of the proposed framework is illustrated in Fig.~\ref{fig:framework}.
In this section, we first detail \textbf{PN Modeling} (left) to map permutations into PN representations, and \textbf{PN-based Analysis} (middle), where FDD simulates the PN state evolution to compute the objective function, while BDS prevents deadlocks during construction to ensure feasibility.
Subsequently, \textbf{Metaheuristic} (right) is developed based on the ALNS framework to leverage these PN mechanisms to achieve deadlock-free scheduling.

\begin{figure}[t]
	\centering
	\includegraphics[width=0.99\linewidth]{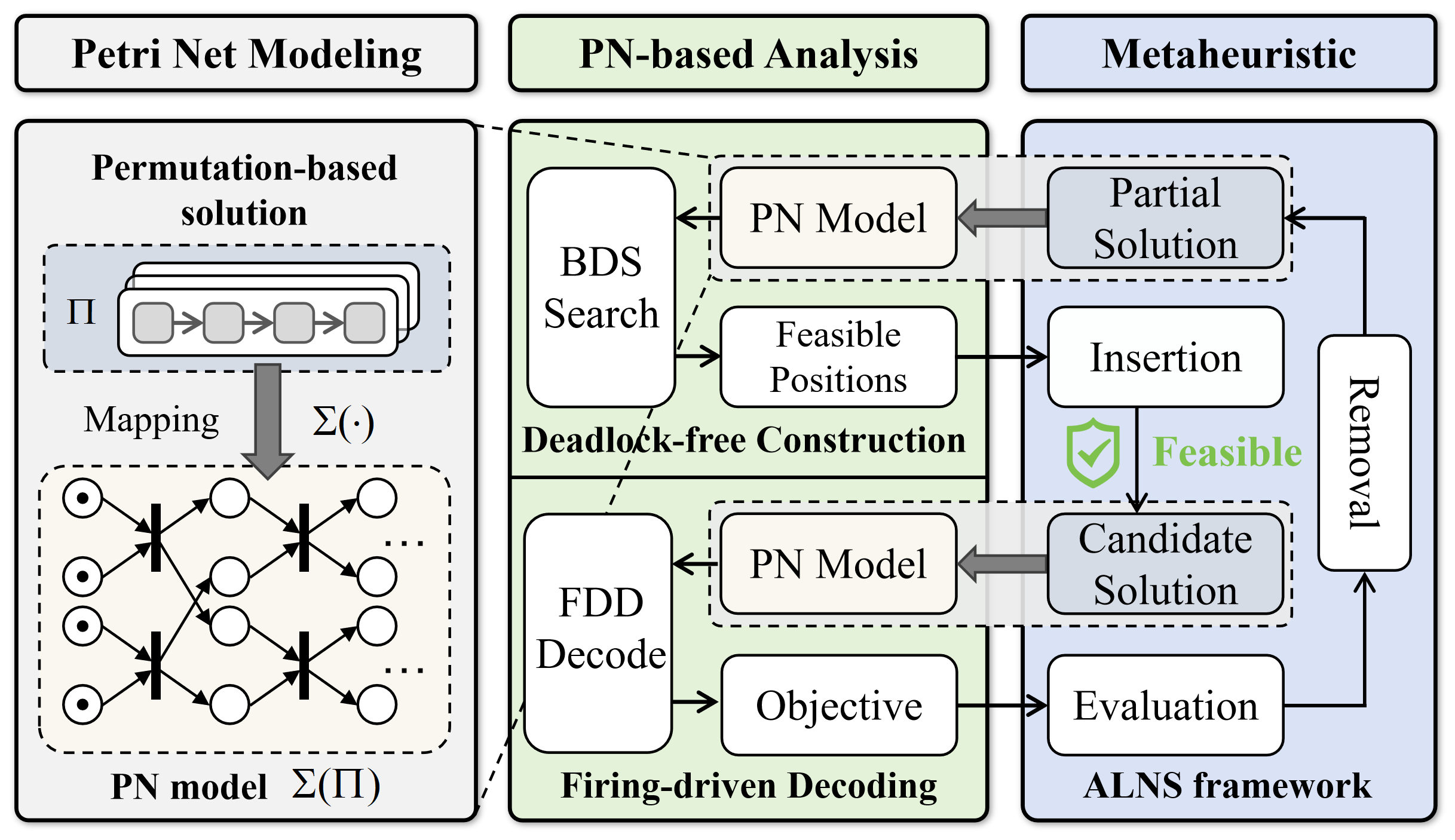}
	\caption{Architecture of the proposed PN-enhanced ALNS framework for deadlock-free scheduling.}
	\label{fig:framework}
	\vspace{-10pt}
\end{figure}

\subsection{Permutation-based Solution Encoding}\label{Solution representation}

We begin by introducing the standard permutation-based solution encoding utilized in the ALNS framework, which serves as the input for the subsequent PN modeling. 
This encoding represents a solution as a dual-vector structure $\Pi = (\Pi^+, \Pi^-)$, where $\Pi^+ = (\pi_r)_{r\in\mathcal{V}^+}$ and $\Pi^- = (\pi_r)_{r\in\mathcal{V}^-}$ are vectors of finite task sequences for carriers and shuttles, respectively.
For each AGV $r \in \mathcal{V}$, its route is represented as a finite sequence of tasks $\pi_r = (0, \pi_{1}^r, \pi_{2}^r, \cdots ,\pi_{n_{r}}^r, 0)$, where $n_r$ denotes the number of tasks assigned to AGV $r$, and element $0$ acts as a placeholder for the specific virtual depot node $o_r$.
The inclusion of depot node $0$ at both route endpoints allows us to decompose the entire trajectory into a set of adjacency pairs, from the initial departure $(0, \pi_1^r)$ through intermediate tasks $(\pi_k^r, \pi_{k+1}^r)$ to the final return $(\pi_{n_r}^r, 0)$.
To formalize the precedence relations, we define notation $(i, j) \in \pi_r$ to indicate that node $i$ immediately precedes node $j$ in sequence $\pi_r$. 
By extension, $(i, j) \in \Pi^+$ (resp. $\Pi^-$) denotes that adjacency relation $(i, j)$ exists in one of the carrier (resp. shuttle) routes, i.e., $\exists r \in \mathcal{V}^+$ (resp. $\mathcal{V}^-$) such that $(i, j) \in \pi_r$.
For cooperative execution, every task $i \in \mathcal{N}$ appears exactly once in carrier routes $\Pi^+$ and exactly once in shuttle routes $\Pi^-$.

Consider an illustrative example with seven tasks $\mathcal{N} = \{1,\cdots,7\}$ assigned to two carriers $\mathcal{V}^+ = \{1,2\}$ and two shuttles $\mathcal{V}^- = \{3,4\}$.
The permutation-encoded solution $\Pi$ specifies the following task sequences:
\begin{enumerate}
	\item Carrier 1: $\pi_{1} = ( 0 \to 5 \to 4 \to 2 \to 0 )$
	\item Carrier 2: $\pi_{2} = ( 0 \to 6 \to 1 \to 3 \to 7 \to 0 )$
	\item Shuttle 3: $\pi_{3} = ( 0 \to 5 \to 1 \to 4 \to 7 \to 0 )$
	\item Shuttle 4: $\pi_{4} = ( 0 \to 6 \to 3 \to 2 \to 0 )$
\end{enumerate}

In this example, a direct precedence relation $5 \prec 4$ is defined locally in $\pi_1$.
Notably, a cross-route precedence $5 \prec 3$ emerges implicitly, as task 5 precedes task 1 in $\pi_3$, whereas task 1 precedes task 3 in $\pi_2$.
This propagation through shared task 1 exemplifies cascading precedence relations.
Such dependencies tend to be deadlock-prone, although this specific instance is valid.
This instance serves as a running example throughout the subsequent sections to illustrate PN modeling, decoding procedures, and deadlock prevention.

\subsection{Petri Net Modeling} \label{Petri net modeling}

\begin{figure}[t]
	\centering
	\includegraphics[width=0.95\linewidth]{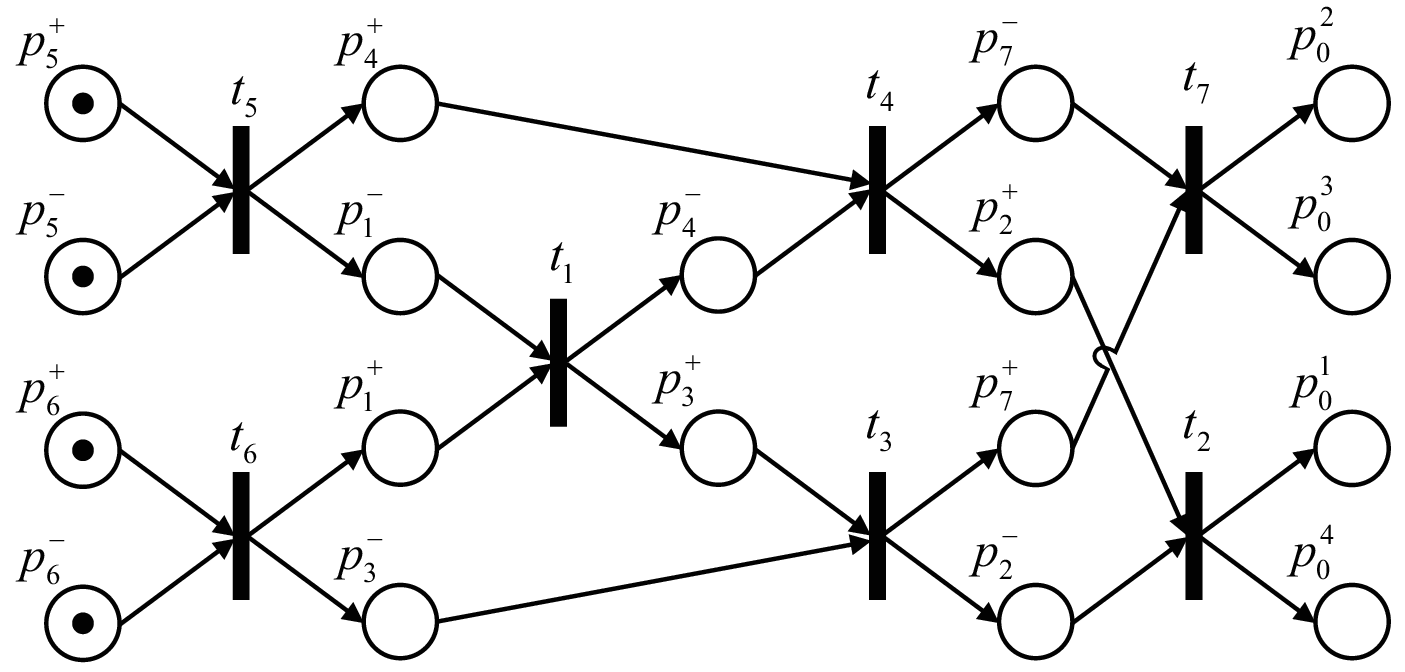}
	\caption{PN $\Sigma(\Pi)$ modeling the example solution $\Pi$.}
	\label{solution_petri}
	\vspace{-10pt}
\end{figure}

Corresponding to the left panel of Fig. \ref{fig:framework}, this subsection details the PN modeling that maps static permutation $\Pi$ into an executable PN model, explicitly characterizing the synchronization constraints and cross-schedule dependencies.

Formally, we adopt Ordinary Petri Nets, defined as a five-tuple $\Sigma = \langle \mathcal{P}, \mathcal{T}, Pre, Post, M_0 \rangle$, where $\mathcal{P}$ and $\mathcal{T}$ are disjoint finite sets of places and transitions, respectively.
The structure is defined by the pre-incidence function $Pre: \mathcal{P} \times \mathcal{T} \rightarrow \{0, 1\}$ and the post-incidence function $Post: \mathcal{T} \times \mathcal{P} \rightarrow \{0, 1\}$.
Specifically, $Pre(p, t) = 1$ denotes a directed arc from place $p$ to transition $t$, and $Post(t, p) = 1$ denotes a directed arc from $t$ to $p$.
Accordingly, preset $^{\bullet}x$ and postset $x^{\bullet}$ denote the sets of input and output nodes of a node $x \in \mathcal{P} \cup \mathcal{T}$, respectively.
The marking of $\Sigma$ is a mapping $M: \mathcal{P} \rightarrow \mathbb{N}$ that assigns tokens to places, with $M_0$ denoting the initial marking.
A transition $t \in \mathcal{T}$ is enabled at marking $M$ if $\forall p \in \mathcal{P}$, $M(p) \geq Pre(p, t)$, denoted as $M[t\rangle$.
Firing $t$ results in a new marking $M'$ such that $\forall p \in \mathcal{P}, M'(p) = M(p) - Pre(p, t) + Post(t, p)$, denoted as $M[t\rangle M'$.
For a firing sequence $\sigma = (t_1, t_2, \cdots, t_k)$, $M[\sigma\rangle M'$ indicates that the transitions in $\sigma$ fire sequentially, transforming $M$ into $M'$.
$\mathcal{R}(M_0)$ represents the set of markings reachable from $M_0$.
A detailed discussion on PN theory can be found in \cite{david1994petri,li2009deadlock}.

\begin{Definition} \label{definition_petri_net}
	The PN modeling solution $\Pi = (\Pi^+, \Pi^-)$ is defined as $\Sigma(\Pi) = \langle \mathcal{P}, \mathcal{T}, Pre, Post, M_0 \rangle$, where:
	\begin{itemize}
		\item $\mathcal{P} = \mathcal{P}^+ \cup \mathcal{P}^- \cup \mathcal{P}_0^+ \cup \mathcal{P}_0^-$ is the set of places, with $\mathcal{P}^+ = \{ p_i^+ \mid i \in \mathcal{N} \}$, $\mathcal{P}^- = \{ p_i^- \mid i \in \mathcal{N} \}$, $\mathcal{P}_0^+ = \{ p_0^r \mid r \in \mathcal{V}^+ \}$, and $\mathcal{P}_0^- = \{ p_0^r \mid r \in \mathcal{V}^- \}$.
		
		\item $\mathcal{T} = \{ t_i \mid i \in \mathcal{N} \}$ is the set of transitions.
		
		\item $Pre: \mathcal{P} \times \mathcal{T} \rightarrow \{0, 1\}$ defines the input arcs, and for any $i \in \mathcal{N}$:
		\begin{equation} \nonumber
			Pre(p, t_i) = 
			\begin{cases} 
				1, & \text{if } p = p_i^+ \text{ or } p = p_i^- \\
				0, & \text{otherwise}
			\end{cases}
		\end{equation}
		
		\item $Post: \mathcal{T} \times \mathcal{P} \rightarrow \{0, 1\}$ defines the output arcs, and for any $i \in \mathcal{N}$:
		\begin{equation} \nonumber
			Post(t_i, p) = 
			\begin{cases} 
				1, & \text{if } p=p_j^+ \text{ and } (i, j) \in \Pi^+ \\
				1, & \text{if } p=p_k^- \text{ and } (i, k) \in \Pi^- \\
				1, & \text{if } p=p_0^r \text{ and } (i, 0) \in \pi_r , r \in \mathcal{V}\\
				0, & \text{otherwise}
			\end{cases}
		\end{equation}
		
		\item $M_0$ is the initial marking, defined as:
		\begin{equation} \nonumber
			M_0(p) = 
			\begin{cases} 
				1, & \text{if } p = p_i^+ \text{ and } (0, i) \in \Pi^+ \\
				1, & \text{if } p = p_i^- \text{ and } (0, i) \in \Pi^- \\
				0, & \text{otherwise}
			\end{cases}
		\end{equation}
	\end{itemize}
\end{Definition}

To illustrate Definition \ref{definition_petri_net}, Fig. \ref{solution_petri} visualizes the PN model $\Sigma(\Pi)$ constructed from the running example solution $\Pi$.
In this model, each transition $t_i \in \mathcal{T}$ corresponds to a task $i \in \mathcal{N}$.
Governed by the $Pre$ function, each $t_i$ is connected to two distinct input places, $p_i^+$ and $p_i^-$.
This structure strictly enforces synchronization, ensuring that $t_i$ fires only when both the assigned carrier and shuttle are simultaneously available (i.e., both input places hold tokens).
Upon firing, the $Post$ function dictates the downstream flow by mapping the routing precedence in $\Pi$ to directed output arcs.
These arcs guide tokens from $t_i$ to the respective successor places $p_j^+$ (next carrier task) and $p_k^-$ (next shuttle task), thereby enforcing the sequential execution order.
For the terminal task of each route, the token is directed to the virtual depot place $p_0^r$, signifying schedule completion, while $M_0$ initializes the input places of the first tasks to enable immediate execution.

\subsection{Firing-driven Decoding}
As shown in the lower-middle part of Fig.~\ref{fig:framework}, decoding in the ALNS framework maps a static candidate solution to a feasible execution schedule, thereby allowing for the evaluation of solution quality to guide the search process.
Specifically, by simulating the state evolution of the constructed PN model $\Sigma(\Pi)$, the static permutation-based solution $\Pi$ is decoded into a schedule and realized in a dynamic, collaborative execution process.
Based on the spatiotemporal information extracted from this process, the final objective function value is derived.

To formalize the termination of this decoding process, we first introduce the final marking $M_\textnormal{F}$.

\begin{Definition}
	The final marking of PN $\Sigma(\Pi)$, denoted by $M_\textnormal{F}$, is defined such that $\forall p_0^r \in \mathcal{P}_0^+ \cup \mathcal{P}_0^-$, $M_\textnormal{F}(p_0^r) = 1$, $\forall p \in \mathcal{P}^+ \cup \mathcal{P}^-$, $M_\textnormal{F}(p) = 0$.
	\label{terminal_marking}
\end{Definition}

\begin{Theorem}
	A necessary and sufficient condition for PN $\Sigma(\Pi)$ to reach $M_\textnormal{F}$ is that $\forall t \in \mathcal{T}$ is fired exactly once. \label{M_F_reach}
\end{Theorem}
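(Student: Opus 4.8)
The plan is to establish the two implications separately, in each case working from the state (marking) equation of $\Sigma(\Pi)$ together with Lemma~\ref{fire_once}. For a firing sequence $\sigma$ with $M_0[\sigma\rangle M$, write $\vec{f}=(f_i)_{i\in\mathcal{N}}$ for its Parikh vector, where $f_i$ counts the firings of $t_i$ in $\sigma$; Lemma~\ref{fire_once} gives $f_i\in\{0,1\}$. It is convenient to assume that every AGV is assigned at least one task, so that each place $p_0^r$ has a (unique) input transition; an AGV with an empty sequence would make $M_\textnormal{F}(p_0^r)=1$ unreachable while the firing condition could still hold, so such AGVs should be excluded beforehand (they are irrelevant to scheduling).

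For sufficiency, suppose $\sigma$ fires every transition exactly once, i.e. $f_i=1$ for all $i\in\mathcal{N}$, and let $M$ be the marking it reaches. I would compute $M$ place by place from $M(p)=M_0(p)-\sum_{t\in p^{\bullet}}W(p,t)f_{t}+\sum_{t\in{}^{\bullet}p}W(t,p)f_{t}$, using that all arc weights are $1$ and all $f_t=1$. By Definition~\ref{definition_petri_net}, each $p_i^{+}$ is consumed only by $t_i$ and produced only by the transition of its carrier-predecessor (if any), so $M(p_i^{+})=1-1+0=0$ when $i$ is first on its carrier and $M(p_i^{+})=0-1+1=0$ otherwise; symmetrically $M(p_i^{-})=0$. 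Each $p_0^r$ is consumed by no transition and produced only by the transition of the last task on AGV~$r$, so $M(p_0^r)=0-0+1=1$. Hence $M=M_\textnormal{F}$, so $M_\textnormal{F}$ is reached.

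For necessity, suppose $M_0[\sigma\rangle M_\textnormal{F}$ with Parikh vector $\vec{f}$; then the state equation $M_\textnormal{F}=M_0+C\vec{f}$ holds, where $C$ is the incidence matrix. Reading this equation at the carrier-side places of a carrier $r$ with task sequence $\pi_r=(\pi_1^r,\dots,\pi_{n_r}^r)$ gives $0=1-f_{\pi_1^r}$ at $p_{\pi_1^r}^{+}$, hence $f_{\pi_1^r}=1$, and $0=f_{\pi_j^r}-f_{\pi_{j+1}^r}$ at $p_{\pi_{j+1}^r}^{+}$ for $1\le j<n_r$, hence $f_{\pi_{j+1}^r}=f_{\pi_j^r}$. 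Chaining these equalities along the sequence yields $f_{\pi_1^r}=\dots=f_{\pi_{n_r}^r}=1$, and since every task of $\mathcal{N}$ lies on exactly one carrier's sequence (constraints~\eqref{c1_1} and \eqref{c1_3}), we obtain $f_i=1$ for all $i$, i.e. every transition fires exactly once in $\sigma$. A more self-contained variant of this step replaces the state equation by the per-AGV place invariant $\sum_{j}M(p_{\pi_j^r}^{+})+M(p_0^r)\equiv 1$ — valid because the sub-net induced by these places and the transitions $\{t_{\pi_j^r}\}$ is a state machine and no other transition touches these places — and then tracks the single token, which starting at $p_{\pi_1^r}^{+}$ is forced to traverse the whole chain to reach $p_0^r$ in $M_\textnormal{F}$.

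The bookkeeping in both directions is routine; the step demanding the most care is the necessity argument, where one must verify that the only transition consuming a token from $p_i^{+}$ is $t_i$ and the only one depositing a token there is that of $i$'s carrier-predecessor — so that the chaining is legitimate — and where Lemma~\ref{fire_once} is genuinely needed, since $M_\textnormal{F}=M_0+C\vec{f}$ by itself only pins down the Parikh vector modulo the extra information $f_i\le 1$.
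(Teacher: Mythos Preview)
Your proof is correct, but the argument differs from the paper's in style. The paper proves both directions by reachability-style reasoning on the net structure: for necessity it assumes some $t$ remains unfired, follows successors along the chain until reaching a transition whose postset meets $\mathcal{P}_0^+\cup\mathcal{P}_0^-$, and concludes that some $p_0^r$ is empty, contradicting Definition~\ref{terminal_marking}; for sufficiency it argues that once $t_i$ fires, $p_i^{\pm}$ stay empty forever (by Lemma~\ref{fire_once}) and each $p_0^r$ receives and keeps its token. You instead read both implications off the marking equation $M=M_0+C\vec{f}$, computing place by place for sufficiency and chaining the carrier-row equalities $f_{\pi_1^r}=1$, $f_{\pi_j^r}=f_{\pi_{j+1}^r}$ for necessity. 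Your algebraic route is slightly more self-contained and makes the dependence on the chain structure of $\Sigma(\Pi)$ explicit; the paper's route is closer to standard Petri-net token-tracking and generalises more readily to nets where the incidence system is less transparent.

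Two small remarks. First, your caveat that AGVs with empty task sequences must be excluded is well taken; the paper silently assumes $n_r\ge 1$. Second, your closing claim that Lemma~\ref{fire_once} is ``genuinely needed'' for necessity is a slight overstatement here: your own chaining already forces $f_{\pi_1^r}=1$ exactly (from $0=1-f_{\pi_1^r}$) and then propagates equality, so the Parikh vector is pinned down by the state equation alone in this particular net (there are no nontrivial T-invariants). Lemma~\ref{fire_once} is, however, essential in the paper's sufficiency argument, where it guarantees that $p_i^{\pm}$ cannot be repopulated after $t_i$ fires.
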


Theorem \ref{M_F_reach} establishes that reaching $M_\textnormal{F}$ is equivalent to the completion of all tasks.
The proofs of all proposed theorems are provided in the supplementary file.
Guided by this criterion, the FDD method is proposed in Algorithm \ref{decoding} to assess solution $\Pi$.
Starting from $M_0$, the algorithm sequentially simulates transition firings to track the system evolution until $M_\textnormal{F}$ is reached.
Theoretically, this process corresponds to generating a trajectory within the reachability graph of $\Sigma(\Pi)$.
Governed by the structure and firing rules of $\Sigma(\Pi)$, this simulation inherently satisfies the synchronization requirements and precedence relations defined in $\Pi$, yielding a spatiotemporal execution profile from which travel distance and tardiness are derived to compute the final objective value.
The computational complexity of FDD is $O(n)$.

The FDD process for the example solution $\Pi$ in Fig. \ref{solution_petri} is represented by the firing sequence $\sigma = (t_5, t_6, t_1, t_4, t_3, t_7, t_2)$, which is detailed in the supplementary file.
It is noted that the firing sequence from $M_0$ to $M_\textnormal{F}$ is not unique.
For example, at $M_0$, transitions $t_5$ and $t_6$ are concurrently enabled.
However, the firing order of concurrent transitions has no impact on the final decoding result.
By Definition~\ref{definition_petri_net}, $\Sigma (\Pi)$ is inherently conflict-free (i.e., $\forall p \in \mathcal{P}, | p^{\bullet} | \leq 1$) due to the deterministic sequential execution of tasks.

\begin{algorithm}[!ht]
	\small
	\SetKwInOut{Input}{input}
	\SetKwInOut{Output}{output}
	\Input{Petri net $\Sigma(\Pi)$, final marking $M_\textnormal{F}$;} 
	\Output{Objective value $F(\Pi$);}
	\BlankLine 
	Initialize current marking $M_\textnormal{C} \leftarrow M_0$;\\
	Initialize enabled transition set $\mathcal{T}_\textnormal{E} \leftarrow \{t_i \mid (0, i) \in \Pi^+\} \cap \{t_i \mid (0, i) \in \Pi^-\}$; \\
	Initialize objective value $F(\Pi) \leftarrow 0$; \\
	
	\While{$M_\textnormal{C} \neq M_\textnormal{F}$}
	{	
		\If{$\mathcal{T}_\textnormal{E} = \emptyset$}
		{	
			$F(\Pi) \leftarrow \infty$; \\
			\textbf{return}{ $F(\Pi)$}. \Comment{\textbf{$\Sigma(\Pi)$ encounters deadlock at $M_\textnormal{C}$}.}  \\
		}
		Randomly select and remove $t_k$ from set $\mathcal{T}_\textnormal{E}$; \\
		Calculate distance $\delta_k$ and tardiness $\tau_k$; \\
		$F(\Pi) \leftarrow F(\Pi) + \lambda \cdot \delta_k + (1 - \lambda) \cdot \tau_k$; \\
		$M_\textnormal{C} [ t_k \rangle M_\textnormal{C}'$ and update $M_\textnormal{C} \leftarrow M_\textnormal{C}'$; \\
		\For{$t \in \bigcup_{p \in t_{k}^{\bullet}} p^{\bullet}$}
		{
			\If{$M_\textnormal{C}[t \rangle$}
			{
				Append $t$ into set $\mathcal{T}_\textnormal{E}$;
			}
		}
	}
	\textbf{return}{ $F(\Pi)$.}
	\caption{Firing-driven decoding}
	\label{decoding} 
\end{algorithm}

\vspace{-10pt}
\subsection{Deadlock-Induced Infeasibility} \label{deadlock_feasibility}

For AHASP whose schedules are tightly coupled by synchronization constraints, a major challenge in applying metaheuristics such as ALNS to AHASP is deadlock-induced infeasibility.
Although each individual route is acyclic, the cascading precedence relations across carriers and shuttles may introduce circular waiting for resources, rendering $\Pi$ infeasible.
In this paper, feasibility refers strictly to structural feasibility. 
Formally, a solution $\Pi$ is termed feasible if and only if its precedence relations are free of circular waits.
As indicated in Line 7 of Algorithm \ref{decoding}, such infeasibility manifests as a deadlock in PN $\Sigma (\Pi)$, causing the state evolution to be blocked before reaching $M_\textnormal{F}$.
The deadlock in PN theory is defined next.
\begin{Definition}
	Deadlock is defined as a marking $M \in \mathcal{R}(M_0)$ such that $\nexists t \in \mathcal{T}$, $M[t\rangle$.
	\label{deadlock}
\end{Definition}

Strictly speaking, the final marking $M_\textnormal{F}$ is a deadlock marking, at which no transition remains enabled.
However, $M_\textnormal{F}$ is a designated target state representing the successful completion of all tasks.
In contrast, the deadlocks addressed in this study refer specifically to structural deadlocks caused by circular wait conditions, where the system halts while tasks remain incomplete.
Fig. \ref{deadlock_example} illustrates a representative example arising from conflicting precedence relations in $\Pi$.
Specifically, task $i$ precedes $j$ in the carrier routes ($(i, j) \in \Pi^+$), implying that the firing of $t_j$ requires a token in $p_j^+$ supplied by $t_i$.
Simultaneously, task $j$ precedes $i$ in the shuttle routes ($(j, i) \in \Pi^-$), implying that the firing of $t_i$ requires a token in $p_i^-$ supplied by $t_j$.
Consequently, a circular wait arises where neither transition can fire (as highlighted by the red loop), preventing the system from reaching $M_\textnormal{F}$.

\begin{Theorem}
	A necessary and sufficient condition for $\Pi$ to be a feasible solution is that PN $\Sigma(\Pi)$ is deadlock-free at $\forall M \in \mathcal{R}(M_0)  \setminus \{M_\textnormal{F}\}$.
	\label{deadlock_feasible}
\end{Theorem}

To assess the feasibility of solution $\Pi$, we establish Theorem \ref{deadlock_feasible} based on deadlock analysis in $\Sigma (\Pi)$.
By combining Theorems \ref{M_F_reach} and \ref{deadlock_feasible}, the absence of deadlock at $\forall M \in \mathcal{R}(M_0)  \setminus \{M_\textnormal{F}\}$ implies the existence of a firing sequence $\sigma$ that contains each transition in $\mathcal{T}$ once, such that $M_0[\sigma \rangle M_\textnormal{F}$.
In physical terms, this means that every task is successfully executed exactly once.
Thus, the feasibility of $\Pi$ can be determined by constructing the reachability graph \cite{li2009deadlock} of PN $\Sigma(\Pi)$, which inherently follows the same process as FDD.
Thus, Algorithm \ref{decoding} unifies decoding and deadlock detection into a single reachability analysis procedure.

\begin{figure}[ht]
	\centering
	\includegraphics[width=0.68\linewidth]{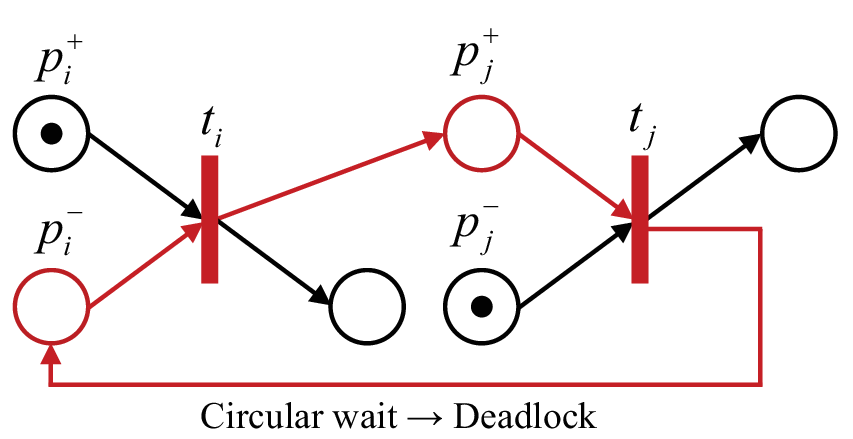}
	\caption{An example of deadlock in PN $\Sigma (\Pi)$ caused by circular wait.}
	\label{deadlock_example}
	\vspace{-10pt}
\end{figure}

\subsection{Deadlock Prevention}\label{position_detection}
Although FDD can detect deadlocks via reachability analysis, relying on post-generation deadlock detection during solution decoding results in a computationally inefficient trial-and-error process that wastes significant computation on infeasible candidates.
To guarantee a deadlock-free search by restricting ALNS insertion operations to strictly feasible positions, we propose BDS to shift from costly a posteriori deadlock detection to efficient a priori prevention.

As illustrated in the “Deadlock-free Construction” module in the upper-middle part of Fig.~\ref{fig:framework}, the core idea of BDS is to leverage the structural analysis of the PN model constructed from the partial solution to preemptively identify insertion positions that would lead to deadlocks.
For the insertion of a removed task $k$, BDS outputs two sets of infeasible positions: the backward set $\mathcal{L_\textnormal{B}}$ and the forward set $\mathcal{L_\textnormal{F}}$.
Specifically, $\mathcal{L_\textnormal{B}}$ contains indices $\overleftarrow{\epsilon}^+(i)$ (resp. $\overleftarrow{\epsilon}^-(i)$), representing the insertion position immediately preceding task $i$ in $\Pi^+$ (resp. $\Pi^-$).
Conversely, $\mathcal{L_\textnormal{F}}$ contains indices $\overrightarrow{\epsilon}^+(i)$ (resp. $\overrightarrow{\epsilon}^-(i)$), representing the insertion position immediately succeeding task $i$.
We assume that the partial solution $\Pi$ prior to insertion is feasible, satisfying the condition in Theorem \ref{deadlock_feasible}.
The BDS method is outlined in Algorithm \ref{feasible_insertion}, which addresses two cases:
(1) \textbf{Unassigned task}:  If task $k$ is not yet assigned to any route (i.e., transition $t_k$ has not been introduced into $\Sigma(\Pi)$), its individual insertion into either $\Pi^+$ or $\Pi^-$ cannot form a circular wait.
Therefore, all candidate positions are feasible at this stage.
(2) \textbf{Partially Assigned Task}: Without loss of generality, assume task $k$ has been inserted into $\Pi^+$ at position $\overrightarrow{\epsilon}^+(i)$ / $\overleftarrow{\epsilon}^+(j)$.
This assignment introduces transition $t_k$, and establishes precedence relations within $\Sigma(\Pi)$, which constrains the insertion of task $k$ into the counterpart $\Pi^-$.
To detect infeasible positions, BDS performs a bidirectional breadth-first search rooted at transition $t_k$ to trace all dependency chains within $\Sigma(\Pi)$.
Specifically, the backward search explores all transitions that possess a directed path to $t_k$, yielding indices $\forall \overleftarrow{\epsilon}^-(b) \in \mathcal{L_\textnormal{B}}$ such that $b \prec k$.
Conversely, the forward search identifies all transitions on directed paths originating from $t_k$, yielding indices $\forall \overrightarrow{\epsilon}^-(f) \in \mathcal{L_\textnormal{F}}$ such that $k \prec f$.
By preemptively excluding these infeasible positions, the resulting "Candidate Solution" in the ALNS framework (Fig. \ref{fig:framework}, right) is guaranteed to be deadlock-free, thereby preventing FDD from wasting resources on invalid candidates.
The soundness and completeness of BDS are proved in the supplementary file.

The computational complexity of BDS is $O(n)$.
Specifically, BDS performs backward and forward breadth-first searches on the dependency graph induced by $\Sigma(\Pi)$.
Breadth-first search has linear complexity in the numbers of searched nodes and directed arcs.
In the induced dependency graph, transitions serve as search nodes, while places encode directed dependencies between transitions.
By Definition~\ref{definition_petri_net}, $|\mathcal{T}|=n$ and $|\mathcal{P}|=2n+m^+ + m^-$.
Moreover, since each place has at most one input transition and one output transition, the number of directed dependency arcs induced by the places is also linear in $|\mathcal{P}|$.
Therefore, the numbers of searched nodes and directed arcs are both linear in $n$, yielding a single-run complexity of $O(n)$ for BDS.

An illustrative example of BDS is presented in Fig. \ref{BDS_example}, where task 8 is inserted at position $\overrightarrow{\epsilon}^+(1)$ / $\overleftarrow{\epsilon}^+(3)$ in the solution $\Pi$ shown in Fig. \ref{solution_petri}.
The corresponding modification in PN $\Sigma (\Pi)$ is highlighted in the red box.
The backward and forward searches are depicted using arrows of two different colors in $\Sigma (\Pi)$.
Note that the direction of the backward search is opposite to that of the blue arrows.
The backward search identifies three infeasible positions: $\mathcal{L_\textnormal{B}} = \{\overleftarrow{\epsilon}^-(1), \overleftarrow{\epsilon}^-(5), \overleftarrow{\epsilon}^-(6)\}$.
The forward search detects three other invalid positions: $\mathcal{L_\textnormal{F}} = \{\overrightarrow{\epsilon}^-(2), \overrightarrow{\epsilon}^-(3), \overrightarrow{\epsilon}^-(7)\}$.
Only three feasible insertion positions exist: $\overleftarrow{\epsilon}^-(4)$, $\overleftarrow{\epsilon}^-(7)$, and $\overrightarrow{\epsilon}^-(6)$, reflecting an appreciable proportion of infeasible positions.

\begin{figure}[tb]
	\centering
	\includegraphics[width=0.97\linewidth]{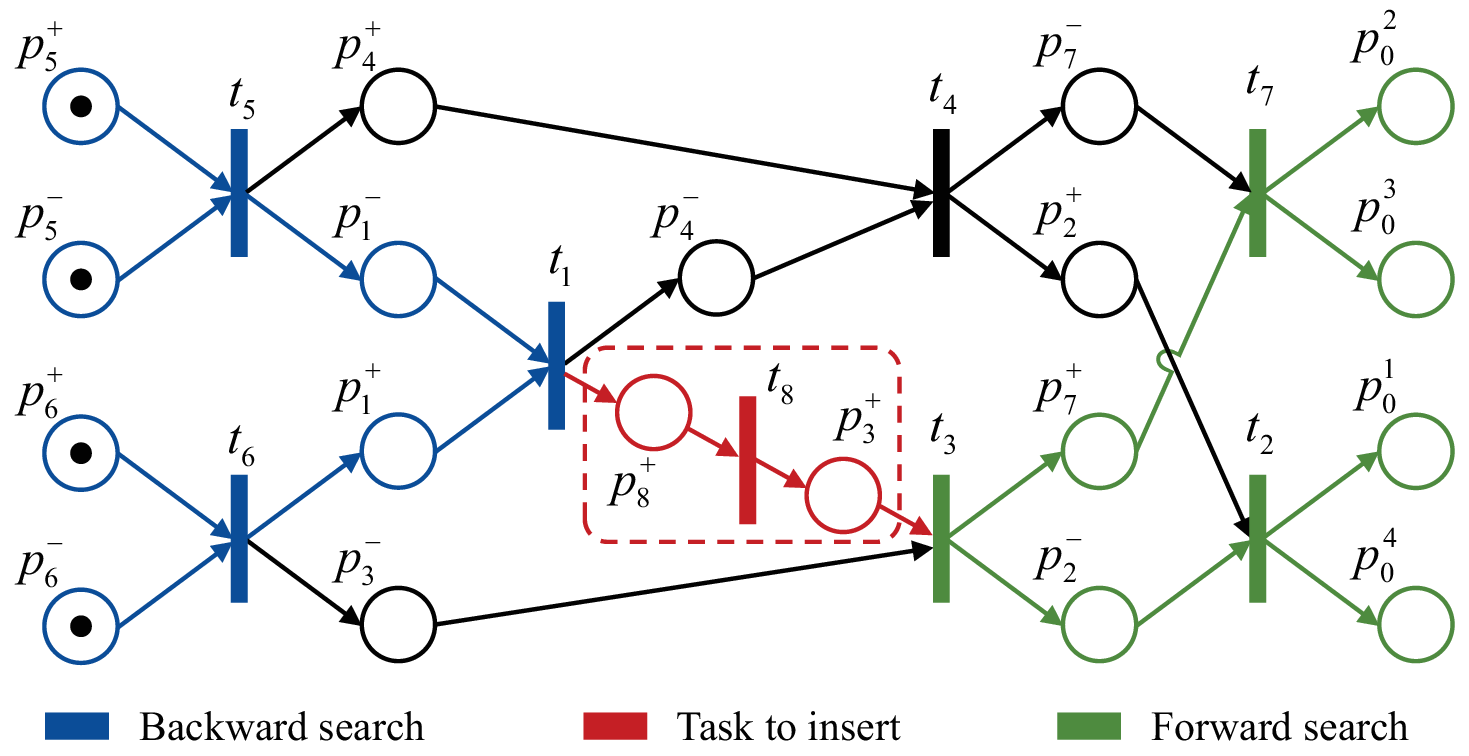}
	\caption{BDS in PN $\Sigma(\Pi)$ with task 8 inserted at $\overrightarrow{\epsilon}^+(1)$ / $\overleftarrow{\epsilon}^+(3)$.}
	\label{BDS_example}
	\vspace{-10pt}
\end{figure}

\begin{algorithm}[!ht]
	\small
	\SetKwInOut{Input}{input}
	\SetKwInOut{Output}{output}
	\Input{PN $\Sigma(\Pi)$, task $k$ to be inserted;} 
	\Output{Infeasible insertion positions for task $k$ in $\Pi^-$;}
	\BlankLine
	Initialize infeasible position set $\mathcal{L_\textnormal{B}} \leftarrow \emptyset$ and $\mathcal{L_\textnormal{F}} \leftarrow \emptyset$; \\
	\If{$t_k \notin \mathcal{T}$}
	{
		\Return{$\emptyset$.} 
	}
	Initialize backward and forward queues $\mathcal{Q}_{\textnormal{B}} \leftarrow \cup_{p \in {}^{\bullet}t_{k}} {}^{\bullet}p$, $ \quad \mathcal{Q}_{\textnormal{F}} \leftarrow \cup_{p \in t_{k}^\bullet} p^\bullet$;\\
	\While{$\mathcal{Q}_{\textnormal{B}} \neq \emptyset$}
	{	
		$t_b \gets \textnormal{\textsc{Dequeue}}(\mathcal{Q}_{\textnormal{B}})$; \\
		Append position $\overleftarrow{\epsilon}^-(b)$ into set $\mathcal{L_\textnormal{B}}$; \\
		$\mathcal{Q_\textnormal{B}} \leftarrow \left[ \left(\cup_{p \in {}^{\bullet}t_{b}} {}^{\bullet}p\right) \setminus \left\{t_d | \overleftarrow{\epsilon}^-(d) \in \mathcal{L_\textnormal{B}} \right\} \right] \bigcup \mathcal{Q_\textnormal{B}}$; \\
	}
	$\triangleright$ \textbf{Forward search}\\
	\While{$\mathcal{Q}_{\textnormal{F}} \neq \emptyset$}
	{
		$t_f \gets \textnormal{\textsc{Dequeue}}(\mathcal{Q}_{\textnormal{F}})$; \\
		Append position $\overrightarrow{\epsilon}^-(f)$ into set $\mathcal{L_\textnormal{F}}$; \\
		$\mathcal{Q_\textnormal{F}} \leftarrow \left[ \left( \cup_{p \in t_{f}^{\bullet}} p^{\bullet}\right) \setminus \{t_d | \overrightarrow{\epsilon}^-(d) \in \mathcal{L_\textnormal{F}}\} \right] \bigcup \mathcal{Q_\textnormal{F}}$; \\
	}
	
	\textbf{return}{ $\mathcal{L_\textnormal{B}} \cup \mathcal{L_\textnormal{F}}$.}
	\caption{Bidirectional Dependency Search}
	\label{feasible_insertion} 
\end{algorithm}

\section{Petri Net-enhanced metaheuristic} \label{Adaptive Large Neighborhood Search}

Having established the PN modeling and analysis strategies in Section \ref{Petri_feasibility} (Fig. \ref{fig:framework}, left and middle panels), we now proceed to construct the upper-level optimization framework.
As illustrated in the right panel of Fig. \ref{fig:framework}, this section details the PN-enhanced ALNS algorithm, specifically describing its architecture, operators, and the computational acceleration achieved via BDS-based deadlock prevention.

\subsection{Neighborhood Operators}\label{Neighborhood operators}
Inspired by \cite{ropke2006adaptive,voigt2024review}, we develop five removal operators and three insertion operators, collectively denoted as $\mathcal{O^\textnormal{R}}$ and $\mathcal{O^\textnormal{I}}$, respectively.
In each iteration of ALNS, a removal operator is selected from $\mathcal{O^\textnormal{R}}$ to remove $\phi \cdot n$ tasks from the current solution $\Pi_\textnormal{C}$, placing them into a removed task set $\mathcal{N^\textnormal{R}}$.
Parameter $\phi \in [0,1]$ denotes the proportion of tasks to be removed relative to the total number of tasks.
The resulting partially destroyed solution is denoted as $\Pi_\textnormal{D}$.
An insertion operator is then selected from $\mathcal{O^\textnormal{I}}$ to reinsert the tasks from $\mathcal{N^\textnormal{R}}$ into $\Pi_\textnormal{D}$, yielding a new solution $\Pi_\textnormal{N}$.

The operators are selected by using a roulette wheel principle based on their historical performance.
Each operator $o \in \mathcal{O^\textnormal{R}} \cup \mathcal{O^\textnormal{I}}$ is assigned a weight $\omega_o$, initialized to 1.
The selection probability for a removal operator $o^-$ is calculated as $\omega_{o^-} / \sum_{o \in \mathcal{O^\textnormal{R}} }\omega_o$, with the same principle applied to insertion operators.
After $\kappa$ iterations, the weight of operator $o$ is updated to $\omega_{o} = (1-\rho)\cdot\omega_{o} + \rho \cdot \xi_{o} / \theta_{o}$, where $\rho \in [0,1]$ denotes the reaction factor, and $\xi_{o}$ and $\theta_{o}$ represent the accumulated score and the usage count of operator $o$, respectively.
Details of the scoring rules are provided in \cite{ropke2006adaptive}.

\subsubsection{Removal Operators}
We employ five distinct operators to remove $\phi \cdot n$ tasks from the current solution $\Pi_\textnormal{C}$.
\textit{Random Removal} removes tasks stochastically to diversify the search.
The greedy heuristics include \textit{Longest Distance Removal}, \textit{Largest Tardiness Removal}, and \textit{Highest Cost Removal}, which remove tasks based on the longest travel distance $\delta_i$, the greatest tardiness $\tau_i$, and the highest weighted cost $F_i = \lambda \cdot \delta_i + (1 - \lambda) \cdot \tau_i$, respectively.
Additionally, \textit{Shaw Removal} \cite{voigt2024review} removes a reference task and its most related neighbors based on a spatiotemporal difference metric $\Omega(i,j) = \psi \cdot (d_{s_i, s_j} + d_{e_i, e_j}) + (1 - \psi) \cdot |T_i^\textnormal{c} - T_j^\textnormal{c}| $.

\subsubsection{Insertion Operators}
To reinsert tasks from the removed set $\mathcal{N^\textnormal{R}}$, we employ three operators.
While all operators greedily place the selected task into the position that minimizes the objective value increase, they differ in the task selection order.
\emph{Greedy Insertion} selects the next task from $\mathcal{N^\textnormal{R}}$ randomly.
In contrast, \emph{Urgency-prioritized} and \emph{Cost-prioritized Greedy Insertion} prioritize tasks based on the earliest deadline ($i = \mathop{\arg\min}_{j \in \mathcal{N^\textnormal{R}}} T_j^\textnormal{c}$) and the maximum cost ($i = \mathop{\arg\max}_{j \in \mathcal{N^\textnormal{R}}} F_j$), respectively.

\subsection{BDS-Accelerated Deadlock-Free Greedy Insertion}\label{BDS-based_acceleration}

This section theoretically analyzes the computational acceleration achieved by integrating the BDS-based deadlock-free construction into the ALNS framework.
By preemptively pruning infeasible positions, BDS ensures that FDD is employed exclusively to evaluate feasible neighborhoods during greedy insertion, thereby eliminating redundant evaluations and enhancing computational efficiency.
This acceleration effect is quantified theoretically below.

Given a destroyed solution $\Pi_\textnormal{D}$ with $n$ tasks, $m^+$ carriers, and $m^-$ shuttles, the total number of pairwise position combinations is $N_{\textnormal{total}} = (n + m^+)\cdot(n + m^-)$.
Let $T_{\textnormal{FDD}}$ and $T_{\textnormal{BDS}}$ denote the average computational cost (i.e., runtime) of a single FDD evaluation and a single BDS operation, respectively.
Without BDS, the standard greedy insertion assesses every position combination via FDD, yielding a total computational cost of:
\begin{gather}
	C_\textnormal{O} = T_{\textnormal{FDD}} \cdot(n+m^+)\cdot(n+m^-) \label{complxity_all}
\end{gather}
With the introduction of BDS, for each candidate position $\epsilon \in \mathcal{L}^+$ (where $\mathcal{L}^+$ is the set of carrier positions with $|\mathcal{L}^+| = n + m^+$), the corresponding set of infeasible shuttle positions $\mathcal{L_\textnormal{I}^-}(\epsilon)$ is proactively eliminated via a single BDS calculation.
FDD is then executed only for the remaining feasible combinations.
Consequently, the computational cost of the insertion procedure with BDS becomes:
\begin{gather}
	C_\textnormal{A} = T_{\textnormal{FDD}} \cdot \left[\sum_{\epsilon \in \mathcal{L^+}} \left (n + m^- - |\mathcal{L_\textnormal{I}^-}(\epsilon)| \right)\right] + T_{\textnormal{BDS}} \cdot (n + m^+) \label{complexity_BDS}
\end{gather}
Given that FDD and BDS share a linear complexity of $O(n)$ and exhibit comparable empirical runtimes, we approximate $T_{\textnormal{BDS}} \approx T_{\textnormal{FDD}}$.
The acceleration ratio $R_\textnormal{C}$ of $C_\textnormal{A}$ to $C_\textnormal{O}$ is thus derived as:
\begin{gather}
	R_\textnormal{C} =\frac{C_\textnormal{A}}{C_\textnormal{O}}  = 1 - \frac{\sum_{\epsilon \in \mathcal{L^+} }|\mathcal{L_\textnormal{I}^-}(\epsilon)| - (n + m^+)}{(n + m^+)\cdot(n + m^-)} \label{ratio}
\end{gather}
which indicates that the acceleration efficiency depends on the proportion of infeasible positions pruned.
A threshold case arises when $|\mathcal{L_\textnormal{I}^-}(\epsilon)| = 1$ on average, resulting in $R_\textnormal{C} = 1$, which implies the overhead of BDS balances out the pruning benefits.
However, in the context of AHASP, the tight interdependence among AGVs typically results in a high probability of potential deadlocks, meaning $|\mathcal{L_\textnormal{I}^-}(\epsilon)|$ is significantly larger than one.
Consequently, the subtraction term in \eqref{ratio} becomes substantial, driving $R_\textnormal{C}$ well below 1.
This result theoretically validates the computational acceleration of the proposed PN-based deadlock-free framework for the ALNS algorithm.

\subsection{Framework of ALNS}
The framework of ALNS is shown in the right panel of Fig. \ref{fig:framework}.
ALNS starts by constructing an initial solution through greedy insertion of all tasks.
In each iteration, a pair of removal and insertion operators is selected to generate a new solution $\Pi_\textnormal{N}$.
Subsequently, $\Pi_\textnormal{N}$ is evaluated by using FDD.
Then, a simulated annealing-like acceptance criterion determines whether to accept $\Pi_\textnormal{N}$ with a fixed temperature $T_\textnormal{F} = \mu \cdot \sum_{i \in \mathcal{N}}d_{s_i, e_i} / [n \cdot (m^+ + m^-)]$ \cite{ruiz2019iterated}.
Here, $\mu$ is the temperature coefficient requiring calibration.
Following this, the weights of the selected operators are updated.
Finally, when the termination condition is met, the best solution is returned.
To ensure practical responsiveness while adapting to varying problem scales, we adopt a time-based stopping criterion.
The maximum CPU runtime (in seconds) is set according to instance complexity as $\Delta T = \zeta \cdot n^2 \cdot (m^+ + m^-)$, where $n$, $m^+$, and $m^-$ denote the number of tasks, carriers, and shuttles, respectively, and $\zeta$ is a scaling factor determined by the required responsiveness.

\section{Numerical Experiment}\label{Numerical Experiment}

This section conducts extensive experiments to validate the computational efficiency of the \textit{PN-based deadlock-free scheduling framework} and the optimization performance of the developed \textit{ALNS}.
First, we quantify the acceleration effects and solution improvements enabled by the proactive deadlock prevention mechanism.
Next, we benchmark the proposed ALNS against exact solvers and state-of-the-art metaheuristics to demonstrate its competitiveness. 
Subsequently, we conduct operational analyses and sensitivity tests to derive managerial insights. 
Finally, the generalizability of the proposed framework is verified on general HRSP instances.

\subsection{Experimental Setting}
Given the specific industrial context of AHASP, we collected a comprehensive dataset from the battery manufacturing facility shown in Fig. \ref{factory}.
To safeguard data privacy, the dataset is aggregated and perturbed while preserving practical relevance.
The dataset is categorized into five instance scales based on task count: $n=5, 10, 20, 30, \textnormal{and } 40$, where 40 tasks correspond to the largest simultaneously scheduled workload observed in the studied real-world manufacturing system.
Each group contains 10 instances, with a total of 50.
Instances are labeled in the format T10\_I1, denoting the first instance with 10 tasks.
This manufacturing system is equipped with four carriers and eight shuttles ($m^+$ = 4 and $m^-$ = 8).
The AGVs operate at a velocity of $v$ = 1.2m/s.
The attachment, detachment, and load durations are set to $t^\textnormal{a}$ = 8s, $t^\textnormal{d}$ = 8s, and $t^\textnormal{p}$ = 30s, respectively.
To balance the distance and tardiness, $\lambda$ is set to 0.4.
Parameter $\zeta$ is set to 0.01 according to practical response requirements.

Algorithm performance is assessed using the relative percentage deviation (RPD), defined as: 
\begin{gather}
	\textnormal{R}_{\textnormal{PD}} =  \frac{F-F^\dagger}{F^\dagger} \times 100\% \label{metric}
\end{gather}
where $F$ denotes the minimum objective value obtained by a specific algorithm for a given instance, and $F^\dagger$ denotes the best (minimal) objective value achieved by all algorithms for that instance.

All algorithms are programmed in Python and run on a computer with AMD Ryzen 9 7945HX 2.5GHz CPU, 32GB of RAM, and 64-bit Windows 11.

\subsection{Parameter Tuning} \label{Parameter Tuning}
The proposed metaheuristic involves five parameters: temperature coefficient $\mu$, removal proportion $\phi$, reaction factor $\rho$, update interval $\kappa$, and Shaw removal weight $\psi$.
To identify the optimal parameter configuration, we apply the Taguchi method for the design of experiments \cite{montgomery2017design} by using instance T30\_I1.
Each of the five parameters is tested at four levels, forming an orthogonal array $L_{16}(4^5)$.
Each parameter configuration is independently run 20 times, and the main effects of the five parameters are demonstrated in Fig. \ref{main_effect}.
As the Smaller-the-Better criterion is adopted, smaller AOVs (average objective values) in Fig. \ref{main_effect} imply better parameter configurations.
Thus, the parameters are configured as $\mu$ = 20, $\phi$ = 0.3, $\rho$ = 0.1, $\kappa$ = 15, and $\psi$ =0.3.
Given that ALNS is relatively insensitive to parameter settings due to its self-adaptive mechanism \cite{ropke2006adaptive}, this configuration is fixed for all subsequent experiments to verify robustness across varying operational conditions.

\begin{figure}[ht]
	\centering
	\includegraphics[width=0.92\linewidth]{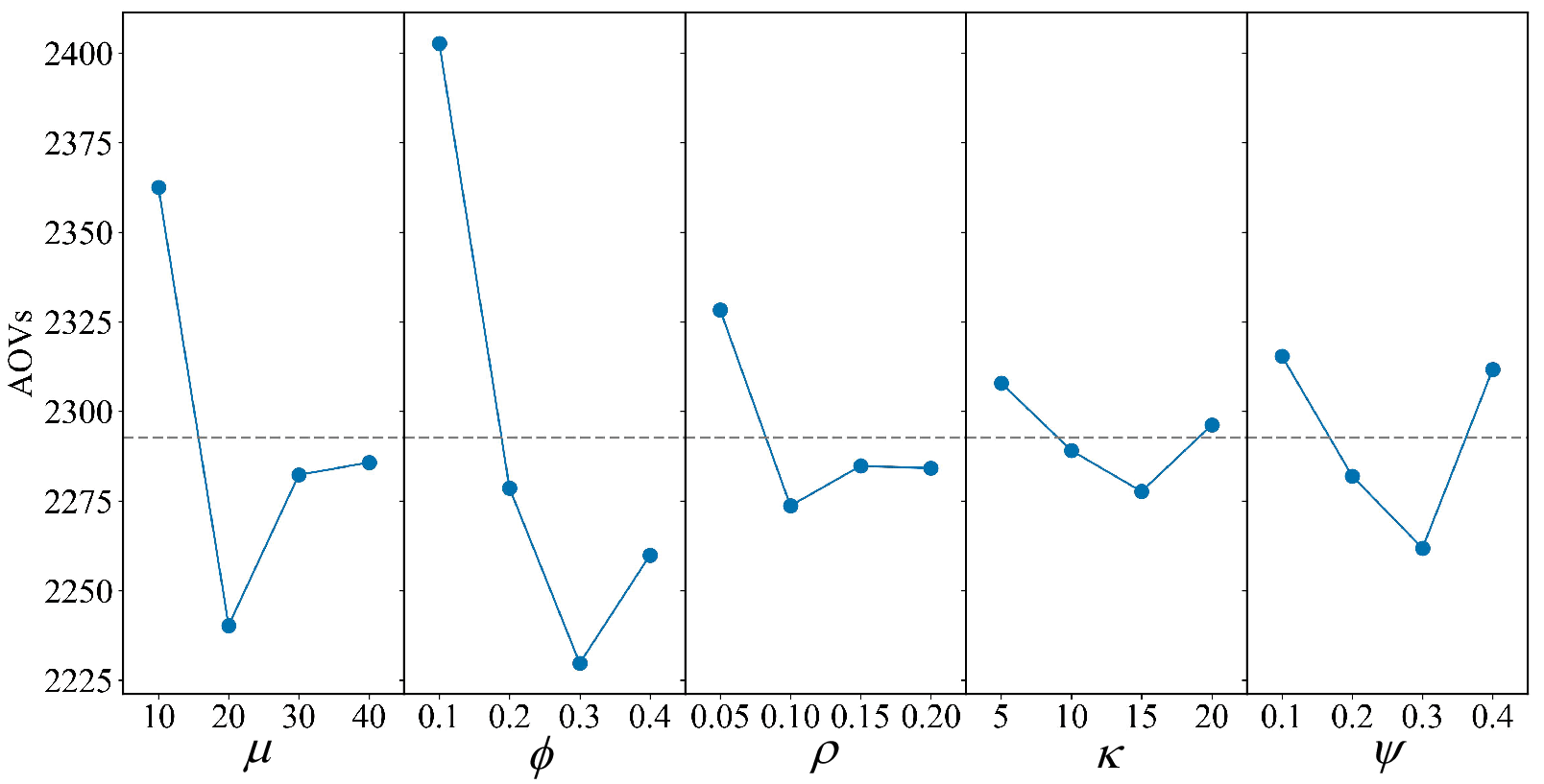}
	\caption{Main effects plot for the five key parameters of ALNS.}
	\label{main_effect}
	\vspace{-10pt}
\end{figure}

\subsection{Evaluation of PN-based deadlock-free framework}
This section evaluates the effectiveness of the proposed PN-based deadlock-free scheduling framework.
By integrating BDS, we shift from a posteriori deadlock detection based on FDD to a priori deadlock prevention enabled by PN structural analysis.
To quantify the resulting computational acceleration and optimization gains, we benchmark the BDS-enhanced ALNS with active deadlock prevention against a baseline ALNS that relies exclusively on FDD for passive deadlock detection.
Under the fixed time limit $\Delta_T$ defined in Section V-C, we record the iteration counts ($\chi^\textnormal{A}$ and $\chi^\textnormal{O}$) and RPD values ($\textnormal{R}_{\textnormal{PD}}^\textnormal{A}$ and $\textnormal{R}_{\textnormal{PD}}^\textnormal{O}$) achieved in the two settings under varying numbers of tasks and AGVs.
Three metrics, namely the acceleration ratio ($\textnormal{A}_\textnormal{R}$), the gap between RPDs ($\textnormal{G}_\textnormal{AP}$), and the infeasible position proportion ($\textnormal{I}_\textnormal{PP}$), are defined as follows:
\begin{gather}
	\textnormal{A}_\textnormal{R} = \frac{\chi^\textnormal{A} - \chi^\textnormal{O}}{\chi^\textnormal{O}} \times 100\%  \label{AR}\\ 
	\textnormal{G}_\textnormal{AP} = \textnormal{R}_{\textnormal{PD}}^\textnormal{O} - \textnormal{R}_{\textnormal{PD}}^\textnormal{A} \label{GAP} \\
	\textnormal{I}_\textnormal{PP} = \frac{\sum_{\epsilon \in \mathcal{L^+} }|\mathcal{L_\textnormal{I}^-}(\epsilon)|}{(n + m^+) \cdot (n + m^-)}  \times 100\% \label{IPP}
\end{gather}
The average CPU runtime per execution for FDD and BDS is also recorded.
Each instance is run 20 times, and the average metric values are reported in Table \ref{BDS_acceleration_method}.
The notation C$x$\_S$y$ represents a fleet configuration comprising $x$ carriers and $y$ shuttles.

As shown in Table~\ref{BDS_acceleration_method}, the proposed framework exhibits strong computational scalability, with the runtimes of both FDD and BDS increasing approximately linearly with the number of tasks.
The active deadlock prevention framework becomes particularly effective in highly constrained, large-scale scenarios.
Specifically, for the tightly coupled C2\_S4 instances ($n=40$), the acceleration ratio ($\textnormal{A}_\textnormal{R}$) reaches 154.56\%, reflecting a substantial increase in search iterations by proactively filtering out a large proportion of infeasible candidates ($\textnormal{I}_\textnormal{PP} = 69.58\%$).
Even in loosely coupled settings with sufficient fleet resources (e.g., C5\_S10), the framework maintains a robust acceleration of over 40\% for large-scale instances with 40 tasks.
Moreover, this accelerated exploration contributes to improved solution quality.
As indicated by the positive $\textnormal{G}_\textnormal{AP}$, the computational resources saved from avoiding invalid evaluations are reallocated to more effective exploration of the feasible solution space, leading to better optimization performance.

\begin{table*}[ht]
	\footnotesize
	\caption{Average Improvement ratios in iterations and objectives with PN-based deadlock-free scheduling framework.}
	\centering
	\begin{tabular}{ccccccccccccccc}
		\hline
		\multirow{2}{*}{\makecell{Task\\Number} } & \multicolumn{2}{c}{CPU Runtime ($\mu s$)} & \multicolumn{3}{c}{C2\_S4} & \multicolumn{3}{c}{C3\_S6} & \multicolumn{3}{c}{C4\_S8} & \multicolumn{3}{c}{C5\_S10} \\ \cmidrule(r){2-3} \cmidrule(r){4-6} \cmidrule(r){7-9} \cmidrule(r){10-12} \cmidrule(r){13-15}
		& FDD    & BDS   & $\textnormal{I}_\textnormal{PP}$ & $\textnormal{A}_\textnormal{R}$   & $\textnormal{G}_\textnormal{AP}$& $\textnormal{I}_\textnormal{PP}$ & $\textnormal{A}_\textnormal{R}$  & $\textnormal{G}_\textnormal{AP}$& $\textnormal{I}_\textnormal{PP}$ & $\textnormal{A}_\textnormal{R}$  & $\textnormal{G}_\textnormal{AP}$& $\textnormal{I}_\textnormal{PP}$ & $\textnormal{A}_\textnormal{R}$  & $\textnormal{G}_\textnormal{AP}$ \\ \hline
		5    & 14.60  & 9.64  & 26.43 & 25.09  & 0.27 & 14.64 & 22.21 & 0.41 & 11.16 & 19.75 & 0.11 & 8.07   & 14.54 & 0.13 \\
		10   & 28.72  & 18.63 & 38.80 & 49.44  & 0.78 & 22.03 & 34.26 & 0.28 & 14.65 & 24.62 & 0.74 & 11.24  & 21.95 & 0.16 \\
		20   & 57.74  & 37.47 & 55.95 & 84.37  & 2.29 & 33.12 & 54.46 & 1.43 & 21.23 & 37.25 & 2.65 & 15.03  & 30.52 & 1.31\\
		30   & 87.37  & 59.20 & 63.99 & 120.94 & 3.08 & 39.45 & 70.18 & 4.44 & 25.84 & 50.15 & 4.06 & 17.82  & 37.46 & 2.07 \\
		40   & 123.12 & 84.84 & 69.58 &154.56  & 2.34 & 43.40 & 87.19 & 3.93 & 29.82 & 62.21 & 3.81 & 20.07 & 44.93 & 2.86 \\ \hline
	\end{tabular}%
	\label{BDS_acceleration_method}
	\vspace{-5pt}
\end{table*}

\subsection{Comparison Experiments for ALNS}
Given the novelty of AHASP and the lack of directly comparable benchmarks in the literature, we compare the developed ALNS with four state-of-the-art metaheuristics for related AGV scheduling problems: variable neighborhood search (VNS) \cite{wang2024effective}, iterated greedy (IG) algorithm \cite{zou2023effective}, genetic algorithm (GA) \cite{liu2023improved}, and discrete artificial bee colony (DABC) algorithm \cite{zou2020effective}.
VNS in 2024 and IG in 2023 share a similar framework to ALNS as individual-based algorithms, differing mainly in neighborhood strategies.
GA in 2023 and DABC in 2020 are two representative population-based algorithms.
To guarantee a fair comparison, all comparative metaheuristics incorporate the proposed PN-based deadlock-free framework, and the termination criterion is a unified CPU time limit of $\Delta T$.
Additionally, we employ Gurobi (version 11.0) with a deterministic Branch-and-Cut algorithm for the MILP model, and the Google OR-Tools CP-SAT solver for constraint programming, serving as exact solution benchmarks.
Moreover, the Earliest Due Date Bidding-based (EDDBID) scheduling policy commonly used in practice is included for comparison \cite{singh2022matheuristic}.
Each metaheuristic is independently run 20 times for each instance.
The average (Avg) and standard deviation (Std) of the obtained RPDs for instances with 5 and 10 tasks are presented in Tables~\ref{result_5} and \ref{result_10}, with the best Avg highlighted in bold.
For Gurobi, we report RPDs under $\Delta T$ and MIP gaps under extended time limits (60s for 5 tasks and 3600s for 10 tasks).
For OR-Tools, we report the CPU time to optimality for 5 tasks, and the resulting RPDs and optimality gaps under a 3600s limit for 10 tasks.

As shown in Table \ref{result_5}, EDDBID performs well on only a few instances with 5 tasks, such as T5\_I5 and T5\_I9, where RPDs are below 3\%.
Despite its short computation time (averaging 90$\mu$s), EDDBID performs poorly and is unstable on most small-scale instances.
Regarding exact methods, Gurobi struggles to certify global optimality, maintaining significant MIP gaps (avg. 21.27\%) even with extended runtime.
In contrast, OR-Tools efficiently proves optimality for all instances within 1.34s (avg. 0.82s), highlighting the advantage of constraint propagation in handling tight synchronization logic.
Benchmarked against these exact solutions, the metaheuristics exhibit competitive performance, maintaining mean RPDs below 2\% and mean Stds below 1\%.
Notably, the developed ALNS matches the proven optimal solutions, consistently obtaining the best solution in every run across all 5-task instances, outperforming other comparative heuristics in stability.

According to Table \ref{result_10}, the performance of EDDBID deteriorates significantly when $n=10$, with a mean RPD exceeding 60\%.
The scalability limitations of exact methods become pronounced at this scale.
Under the strict time limit $\Delta T = 12$s, Gurobi fails to find a feasible solution for instance T10\_I6, while OR-Tools yields a high average RPD of 67.71\%.
Even with runtime extended to 3600s, although OR-Tools identifies the best-known solutions, it fails to certify global optimality, retaining an average gap of 6.24\%.
Gurobi struggles even more, maintaining a substantial optimality gap of 50.67\%.
Thus, the prohibitive computational cost renders exact methods unsuitable for real-time applications.
For heuristic algorithms, GA and VNS exhibit a notable increase in RPD, while the mean RPDs of IG and DABC surpass 3\%.
In contrast, ALNS demonstrates superior efficiency, attaining a near-optimal mean RPD of 1.34\% within just 12s, comparable to the solution quality obtained by OR-Tools in 1 hour.

\begin{table*}[ht]
	\footnotesize
	\caption{Experimental results for instances with 5 tasks ($\Delta T = 3$s).}
	\label{result_5}
	\centering
	\begin{tabular}{cccccccccccccccc}
		\toprule
		&  & \multicolumn{3}{c}{Gurobi} & OR-Tools & \multicolumn{2}{c}{VNS} & \multicolumn{2}{c}{IG} & \multicolumn{2}{c}{GA} & \multicolumn{2}{c}{DABC} & \multicolumn{2}{c}{Ours} \\ 
		
		\cmidrule(lr){3-5} \cmidrule(lr){6-6} \cmidrule(lr){7-8} \cmidrule(lr){9-10} \cmidrule(lr){11-12} \cmidrule(lr){13-14} \cmidrule(lr){15-16}
		
		\multirow{-2}{*}{Instance} & \multirow{-2}{*}{EDDBID} & RPD (3s) & RPD (60s) & Gap & Time (s) & Avg & Std & Avg & Std & Avg & Std & Avg & Std & Avg & Std \\ \hline
		
		T5\_I1 & 49.29 & 19.25 & \textbf{0.00} & 0.00 & 1.01 & 3.29 & 6.74 & \textbf{0.00} & 0.00 & 1.43 & 1.47 & \textbf{0.00} & 0.00 & \textbf{0.00} & 0.00 \\
		T5\_I2 & 7.76 & 29.63 & \textbf{0.00} & 13.54 & 0.90 & \textbf{0.00} & 0.00 & 1.97 & 1.66 & 3.37 & 0.13 & \textbf{0.00} & 0.00 & \textbf{0.00} & 0.00 \\
		T5\_I3 & 20.24 & \textbf{0.00} & \textbf{0.00} & 72.06 & 0.83 & \textbf{0.00} & 0.00 & \textbf{0.00} & 0.00 & \textbf{0.00} & 0.00 & \textbf{0.00} & 0.00 & \textbf{0.00} & 0.00 \\
		T5\_I4 & 47.25 & 134.34 & \textbf{0.00} & 0.00 & 0.66 & 0.39 & 1.76 & \textbf{0.00} & 0.00 & 1.16 & 1.63 & \textbf{0.00} & 0.00 & \textbf{0.00} & 0.00 \\
		T5\_I5 & 2.33 & 52.09 & \textbf{0.00} & 0.00 & 0.63 & \textbf{0.00} & 0.00 & \textbf{0.00} & 0.00 & \textbf{0.00} & 0.00 & \textbf{0.00} & 0.00 & \textbf{0.00} & 0.00 \\
		T5\_I6 & 5.88 & 36.63 & \textbf{0.00} & 57.93 & 1.34 & \textbf{0.00} & 0.00 & 1.39 & 2.51 & 6.01 & 0.00 & \textbf{0.00} & 0.00 & \textbf{0.00} & 0.00 \\
		T5\_I7 & 97.47 & 4.81 & \textbf{0.00} & 0.00 & 0.86 & \textbf{0.00} & 0.00 & \textbf{0.00} & 0.00 & 0.51 & 1.56 & \textbf{0.00} & 0.00 & \textbf{0.00} & 0.00 \\
		T5\_I8 & 20.37 & 21.50 & \textbf{0.00} & 69.17 & 0.67 & \textbf{0.00} & 0.00 & \textbf{0.00} & 0.00 & 0.20 & 0.61 & \textbf{0.00} & 0.00 & \textbf{0.00} & 0.00 \\
		T5\_I9 & 2.34 & 44.88 & \textbf{0.00} & 0.00 & 0.66 & 0.04 & 0.17 & 0.63 & 1.13 & 0.43 & 0.40 & 0.04 & 0.17 & \textbf{0.00} & 0.00 \\
		T5\_I10 & 59.55 & 46.98 & \textbf{0.00} & 0.00 & 0.64 & 0.28 & 1.26 & \textbf{0.00} & 0.00 & 2.70 & 3.98 & 0.41 & 1.56 & \textbf{0.00} & 0.00 \\ \hline
		Mean & 31.25 & 39.01 & 0.00 & 21.27 & 0.82 & 0.40 & 0.99 & 0.40 & 0.53 & 1.58 & 0.98 & 0.04 & 0.17 & 0.00 & 0.00 \\ \bottomrule
	\end{tabular}%
	\vspace{-10pt}
\end{table*}

\begin{table*}[ht]
	\footnotesize
	\caption{Experimental results for instances with 10 tasks ($\Delta T = 12$s).}
	\label{result_10}
	\centering
	\begin{threeparttable}
		\setlength{\tabcolsep}{3.5pt} 
		
		\resizebox{\textwidth}{!}{%
			\begin{tabular}{cccccccccccccccccc}
				\toprule
				&  & \multicolumn{3}{c}{Gurobi} & \multicolumn{3}{c}{OR-Tools} & \multicolumn{2}{c}{VNS} & \multicolumn{2}{c}{IG} & \multicolumn{2}{c}{GA} & \multicolumn{2}{c}{DABC} & \multicolumn{2}{c}{Ours} \\ 
				
				\cmidrule(lr){3-5} \cmidrule(lr){6-8} \cmidrule(lr){9-10} \cmidrule(lr){11-12} \cmidrule(lr){13-14} \cmidrule(lr){15-16} \cmidrule(lr){17-18}
				
				\multirow{-2}{*}{Instance} & \multirow{-2}{*}{EDDBID} & RPD (12s) & RPD (3600s) & Gap & RPD (12s) & RPD (3600s) & Gap & Avg & Std & Avg & Std & Avg & Std & Avg & Std & Avg & Std \\ \midrule
				
				T10\_I1 & 89.85 & 882.77 & 12.20 & 91.52 & 72.60 & \textbf{0.00} & 18.57 & 6.68 & 2.41 & 4.85 & 1.99 & 9.01 & 1.13 & 3.08 & 1.92 & \textbf{3.04} & 1.42 \\
				T10\_I2 & 69.61 & 343.51 & 12.97 & 20.87 & 80.12 & \textbf{0.00} & 7.35 & 5.33 & 5.21 & 2.79 & 2.86 & 8.80 & 4.49 & 5.55 & 2.25 & \textbf{0.52} & 0.84 \\
				T10\_I3 & 56.81 & 325.59 & 7.79 &  91.01 & 15.24 & \textbf{0.00} & 4.02 & 4.28 & 1.00 & 3.71 & 1.51 & 2.92 & 1.09 & 2.91 & 1.12 & \textbf{2.20} & 0.96 \\
				T10\_I4 & 27.37 & 819.79 & 9.01 & 11.45 &  80.72 & \textbf{0.00} &  3.82 & 4.11 & 4.59 & 2.33 & 2.33 & 10.93 & 2.63 & 4.85 & 2.24 & \textbf{0.79} & 1.21 \\
				T10\_I5 & 95.81 & 523.29 & 15.98 & 28.29 &  78.81 &  \textbf{0.00} & 12.69 & 15.68 & 4.27 & 6.45 & 3.17 & 9.76 & 1.99 & 7.63 & 0.69 & \textbf{3.82} & 2.30 \\
				T10\_I6 & 59.81 & -- & 23.82 &  81.65 &  79.19 & \textbf{0.00} &  2.08 & 7.74 & 7.53 & 2.69 & 2.53 & 8.93 & 3.20 & 1.54 & 0.68 & \textbf{1.40} & 1.03 \\
				T10\_I7 & 74.15 & 896.44 & \textbf{0.00} &  4.36 &  73.46 & \textbf{0.00} &  0.00 & 1.00 & 2.15 & 0.29 & 0.37 & 0.40 & 0.38 & 0.13 & 0.14 & 0.01 & 0.07 \\
				T10\_I8 & 54.02 & 421.28 & 14.56 & 34.61 & 80.04 & \textbf{0.00} &  7.85 & 5.72 & 3.51 & 2.71 & 3.75 & 7.50 & 6.60 & 0.57 & 1.39 & \textbf{0.65} & 2.00 \\
				T10\_I9 & 50.51 & 512.42 & 22.85 &  91.58 & 42.88 &  \textbf{0.00} &  5.29 & 7.21 & 5.91 & 2.22 & 2.20 & 18.14 & 3.90 & 2.99 & 2.43 & \textbf{0.22} & 0.31 \\
				T10\_I10 & 66.45 & 682.43 & 7.14 &  92.16 & 74.07 & \textbf{0.00} & 0.70 & 1.79 & 2.56 & 4.19 & 2.77 & 6.57 & 1.68 & 2.00 & 0.75 & \textbf{0.75} & 0.96 \\ \midrule
				Mean & 64.44 & -- & 12.63 &  50.67 &  67.71 & \textbf{0.00} & 6.24 & 5.95 & 3.91 & 3.22 & 2.35 & 8.30 & 2.71 & 3.12 & 1.36 & 1.34 & 1.11 \\ \bottomrule
			\end{tabular}%
		} 
		\begin{tablenotes}
			\footnotesize
			\item[] \textit{Note:} "--" indicates that no feasible solution is found within the time limit.
		\end{tablenotes}
	\end{threeparttable}
	\vspace{-10pt}
\end{table*}

For instances with 20 or more tasks, Gurobi fails to produce a feasible solution within the 3600s runtime.
Similarly, while OR-Tools identifies feasible solutions for $n=20$, it yields a high average RPD of 22.80\% and fails to find any valid solution for instances with 30 or more tasks.
Meanwhile, EDDBID performs poorly with significantly inferior results (mean RPD $>100\%$).
Consequently, these approaches are excluded from the subsequent comparison for large-scale instances.
To examine performance variations of metaheuristics across different instance scales, an analysis of variance is performed at a 95\% confidence level using Tukey’s Honest Significant Difference test.
The interaction plot is presented in Fig. \ref{interaction_plot}.
It is observed that the performance of all heuristic algorithms declines as the instance scale increases.
VNS, GA, DABC, and IG all show notable performance degradation on large-scale instances with 30 and 40 tasks, with mean RPDs exceeding 20\%.
Despite some fluctuations in performance, the proposed ALNS maintains mean RPDs around 10\% with no noticeable increase when $n = 20, 30, \textnormal{ and }40$.
Such consistent performance, achieved using the fixed parameter set, empirically validates ALNS's robustness against scale variations.
To further assess stability, RPD box-and-whisker plots are presented for instances with 10, 20, 30, and 40 tasks, as shown in Fig. \ref{box}.
The boxplots of our method show smaller boxes, lower medians, and shorter whiskers, suggesting improved performance and stability over its peers.

\begin{figure}[t]
	\centering
	\includegraphics[width=0.85\linewidth]{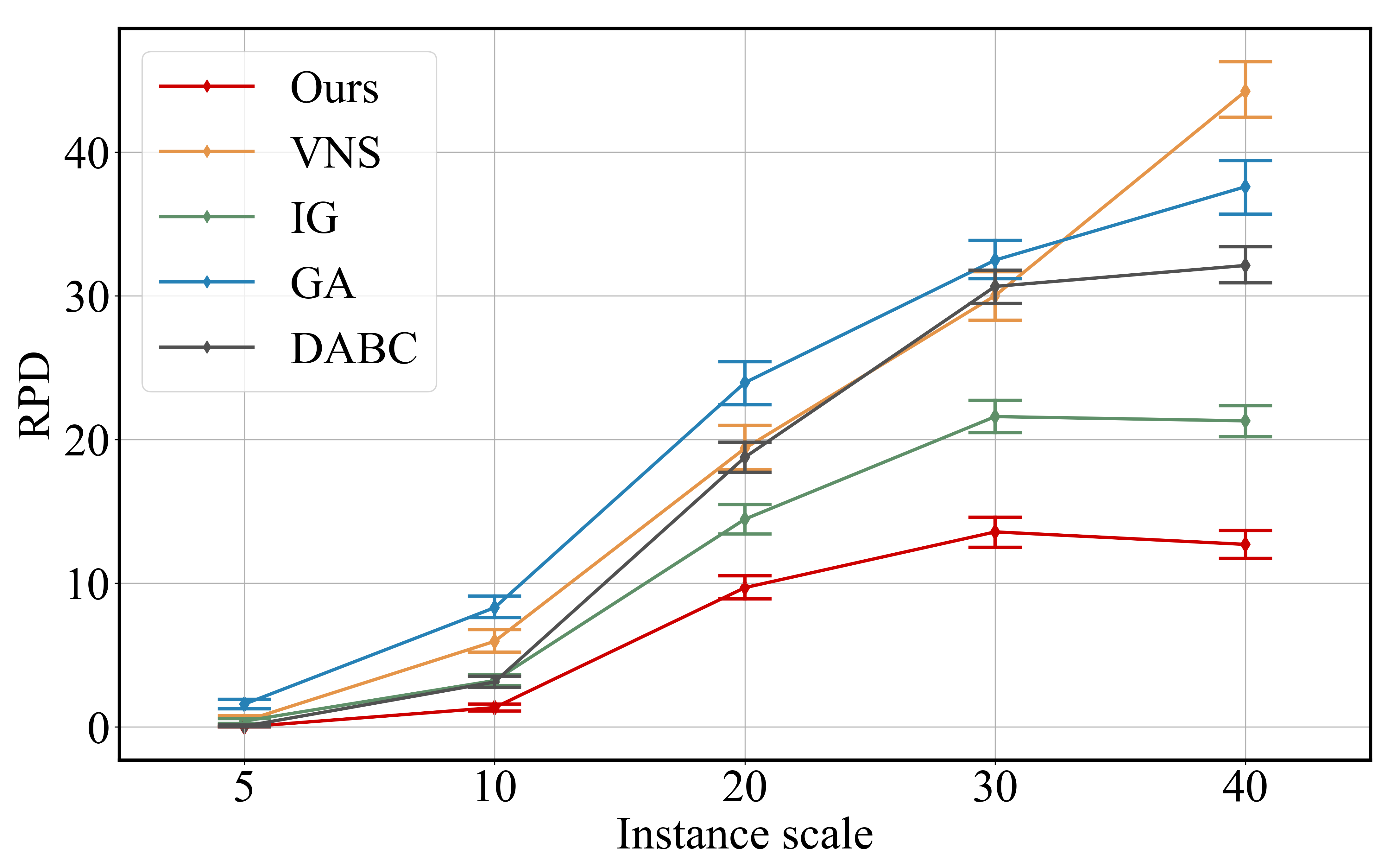}
	\caption{Interaction plot between five metaheuristics and instance scales.}
	\label{interaction_plot}
	\vspace{-10pt}
\end{figure}

\begin{figure}[ht]
	\centering \textbf{}
	\subfigure[$n=10$]{
		\includegraphics[width=0.43\linewidth]{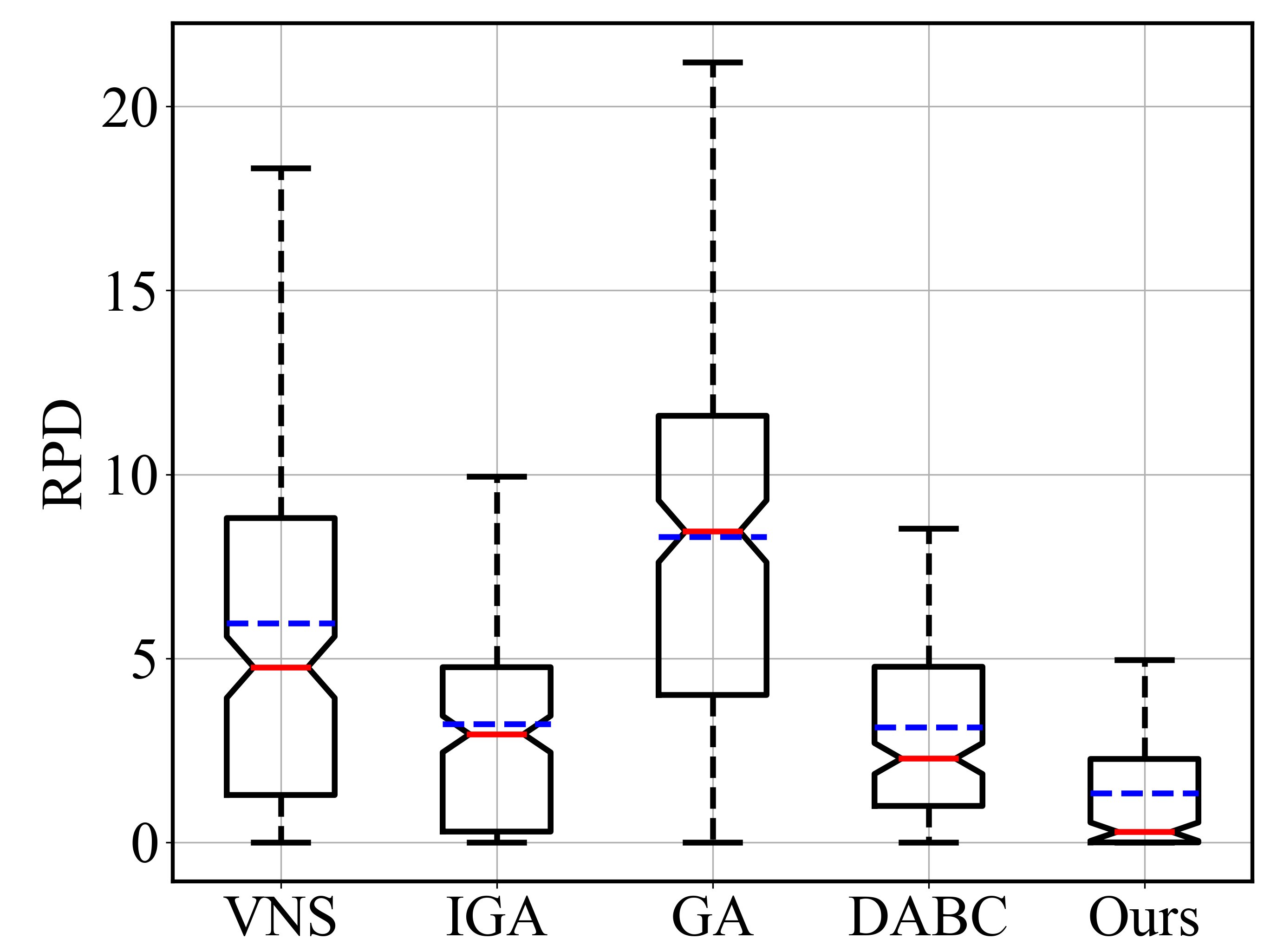}}
	\subfigure[$n=20$]{
		\includegraphics[width=0.43\linewidth]{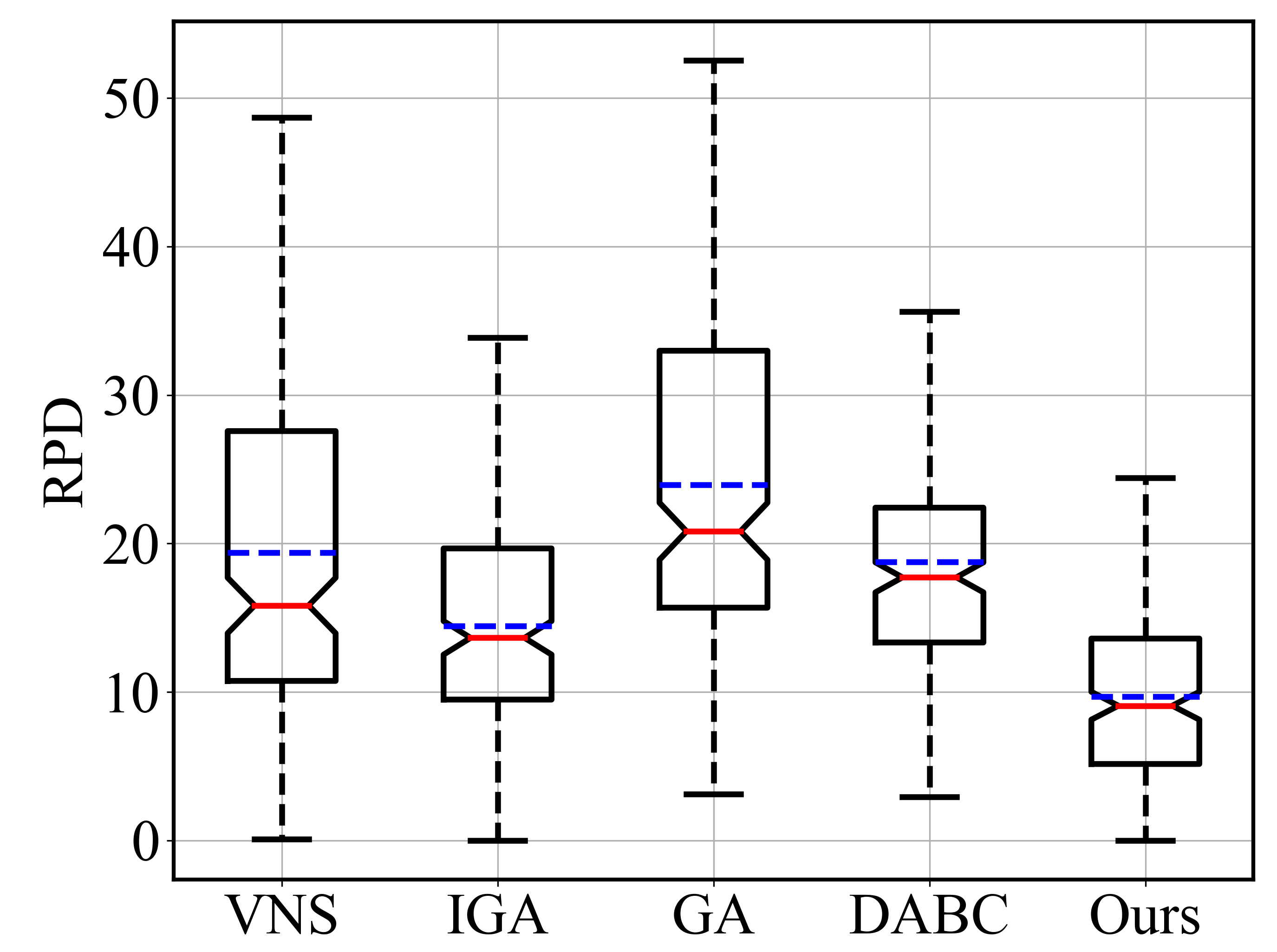}}
	\\
	\subfigure[$n=30$]{
		\includegraphics[width=0.43\linewidth]{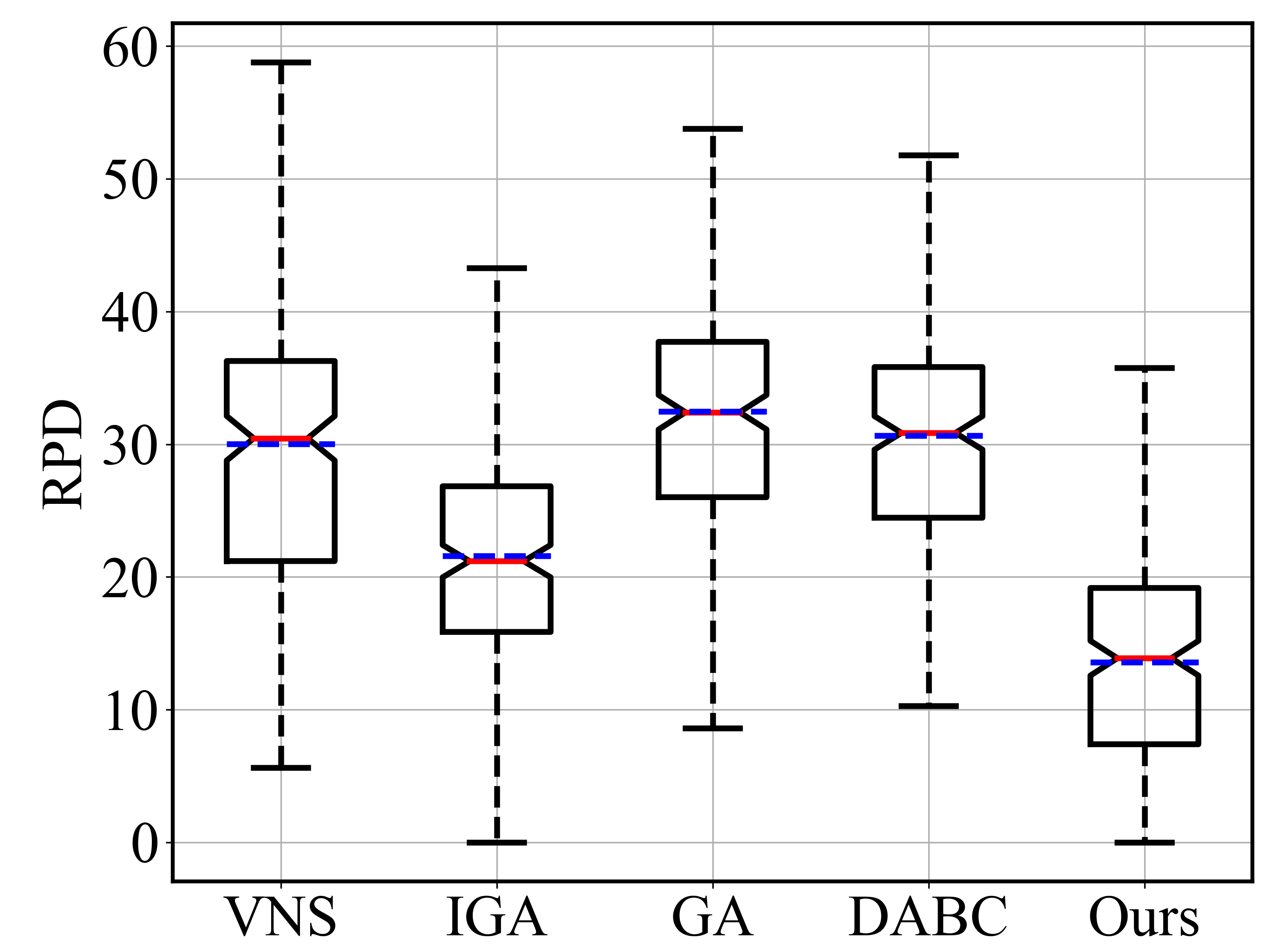}}
	\subfigure[$n=40$]{
		\includegraphics[width=0.43\linewidth]{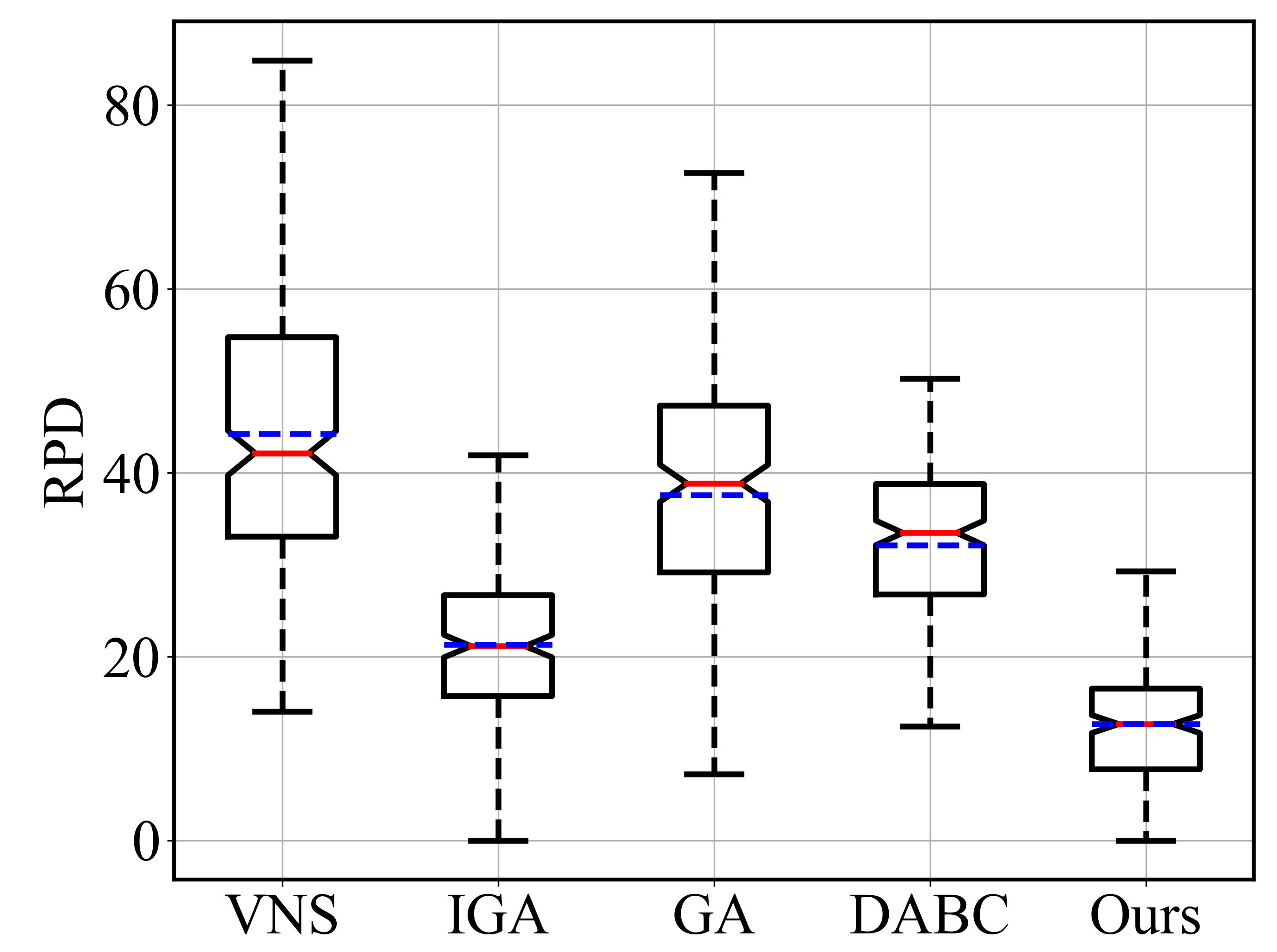}}
	\caption{Box-and-whisker plots of RPDs.}
	\label{box}
	\vspace{-10pt}
\end{figure}

\subsection{Operational Analysis and Managerial Insights}

To identify operational bottlenecks and support fleet configuration decisions, we first analyze the workload balance and resource utilization across different scales, as detailed in Table \ref{tab:operational_analysis_full}.
In it, $t_{\textnormal{W}}$ and $t_{\textnormal{S}}$ denote the average work and synchronization wait time (in seconds), while $\eta$ and $\gamma$ represent the utilization and wait ratios (in percentages), respectively.
Workload ratio is defined as the shuttle's effective work content divided by that of the carrier.
As shown in Table~\ref{tab:operational_analysis_full}, the workload ratio consistently remains around 0.7 across all scales, indicating that the work content required of shuttles is inherently lower than that of carriers.
This imbalance identifies the carrier as the natural bottleneck.
The operational time decomposition further supports this mechanism.
As the scale increases (e.g., $n=40$), carriers approach saturation ($\eta = 91.76\%$) with comparatively low synchronization wait time ($t_{\textnormal{S}} = 152.3$s).
Conversely, shuttles experience a substantial synchronization wait ($t_{\textnormal{S}} = 586.1$s), which is approximately half of their effective work time ($t_{\textnormal{W}} = 1156.5$s).
This pronounced disparity confirms that system throughput is constrained primarily by carrier capacity.

Building on these operational findings, we conduct a sensitivity analysis to optimize fleet configuration.
The carrier count $m^+$ is examined at five levels $\{2, \dots, 6\}$, and the shuttle count $m^-$ at nine levels $\{4, \dots, 12\}$.
Each combination is run 20 times on instance T30\_I1, with average travel distance and tardiness presented in the heatmaps in Fig. \ref{heatmap}.
The results confirm that the system is carrier-constrained under the workload ratio of 0.7.
As carriers are nearly saturated in the baseline C4\_S8 configuration, significant performance gains are achieved by increasing carrier capacity.
Specifically, increasing the carrier count by one (C5\_S8) drastically reduces tardiness from 634.58 to 76.18. 
In contrast, increasing the non-bottleneck resource, i.e., the shuttle (C4\_S9), results in a much smaller reduction to 416.53.
This confirms that when carrier capacity is the limiting factor, adding shuttles merely increases synchronization waits without significantly boosting productivity.
However, once the bottleneck is relieved ($m^+ \geq 5$), a diminishing marginal effect is observed.
To address the cost-benefit trade-off regarding fleet sizing, we analyze the marginal utility of further expansion.
Transitioning from C4\_S8 to C5\_S8 yields the highest return, resolving the primary bottleneck with an 88\% reduction in tardiness.
However, further expansion encounters diminishing returns or even negative ones.
Adding a sixth carrier (C6\_S8) reduces tardiness only marginally to 30.91, which may not justify the additional procurement cost.
Similarly, increasing the shuttle count to C5\_S9 results in a slight deterioration in performance due to increased coordination complexity.
In conclusion, C5\_S8 is identified as the most cost-effective configuration, striking an optimal balance between service quality and investment cost.

\begin{table}[htbp]
	\centering
	\caption{Operational workload and utilization analysis of carriers and shuttles}
	\label{tab:operational_analysis_full}
	\setlength{\tabcolsep}{2.5pt}
	\resizebox{\linewidth}{!}{%
		\begin{tabular}{cccccccccc}
			\toprule
			\multirow{2}[2]{*}{Scale} & \multicolumn{4}{c}{Carrier} & \multicolumn{4}{c}{Shuttle} & \multirow{2}[2]{*}{\shortstack{Workload\\Ratio}} \\
			\cmidrule(lr){2-5} \cmidrule(lr){6-9}
			& $t_{\textnormal{W}}$ & $t_{\textnormal{S}}$ & $\eta$ & $\gamma$ & $t_{\textnormal{W}}$ & $t_{\textnormal{S}}$ & $\eta$ & $\gamma$ &  \\
			\midrule
			5     & 183.5 & 19.9  & 71.68 & 7.72  & 128.2 & 23.2  & 41.75 & 13.70 & 0.70 \\
			10    & 402.1 & 63.9  & 86.74 & 13.26 & 279.1 & 69.8  & 57.25 & 19.05 & 0.69 \\
			20    & 840.9 & 101.8 & 89.22 & 10.78 & 571.0 & 253.2 & 64.95 & 30.38 & 0.68 \\
			30    & 1259.1 & 125.5 & 90.98 & 9.02  & 862.7 & 462.2 & 64.60 & 34.51 & 0.69 \\
			40    & 1700.4 & 152.3 & 91.76 & 8.24  & 1156.5 & 586.1 & 66.08 & 33.92 & 0.68 \\
			\bottomrule
		\end{tabular}%
	}
	\vspace{-10pt}
\end{table}

\begin{figure}[ht]
	\centering 
	\subfigure[Travel Distance]{
		\includegraphics[width=0.95\linewidth]{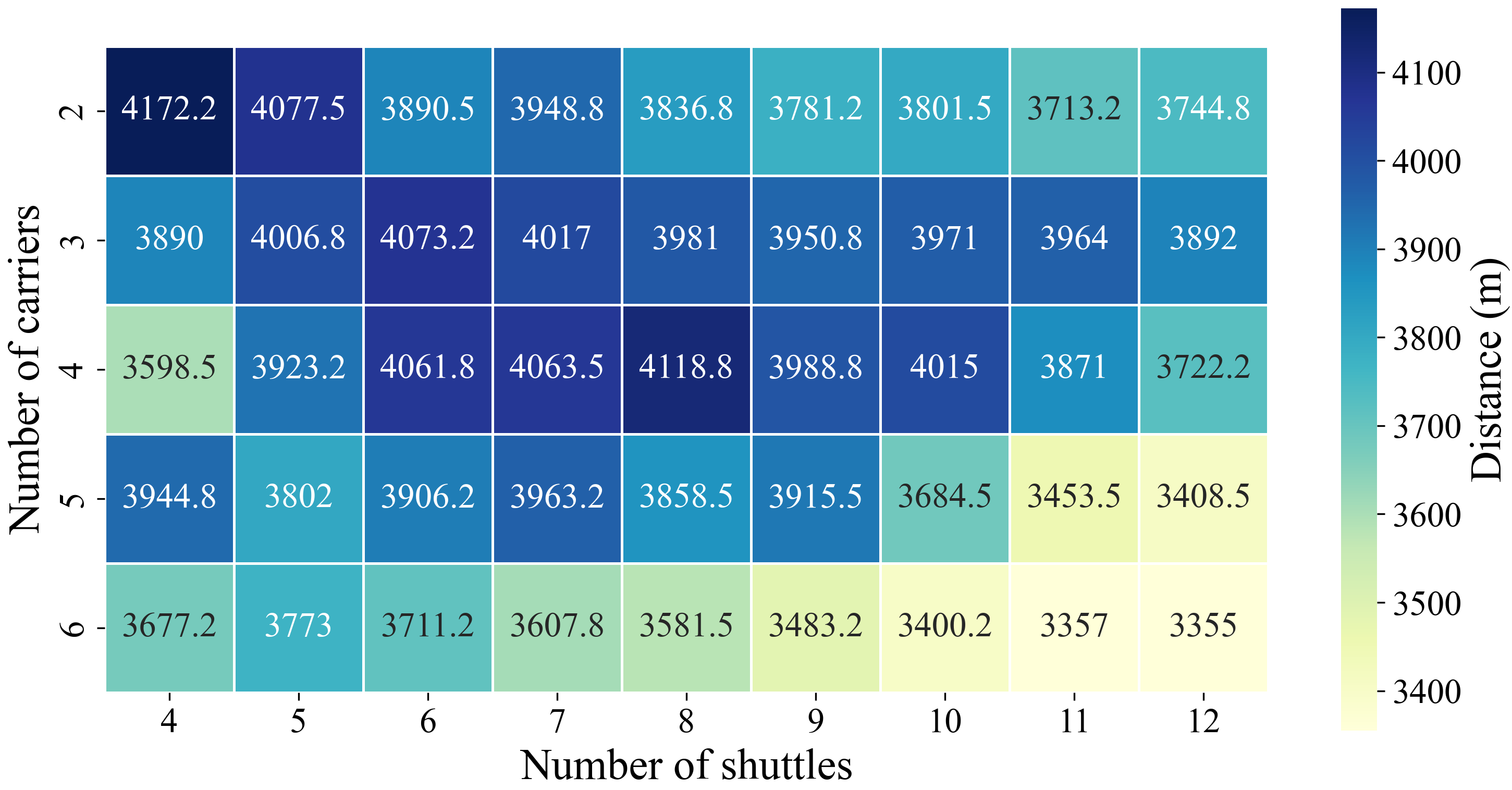}}
	\\
	\subfigure[Tardiness]{
		\includegraphics[width=0.95\linewidth]{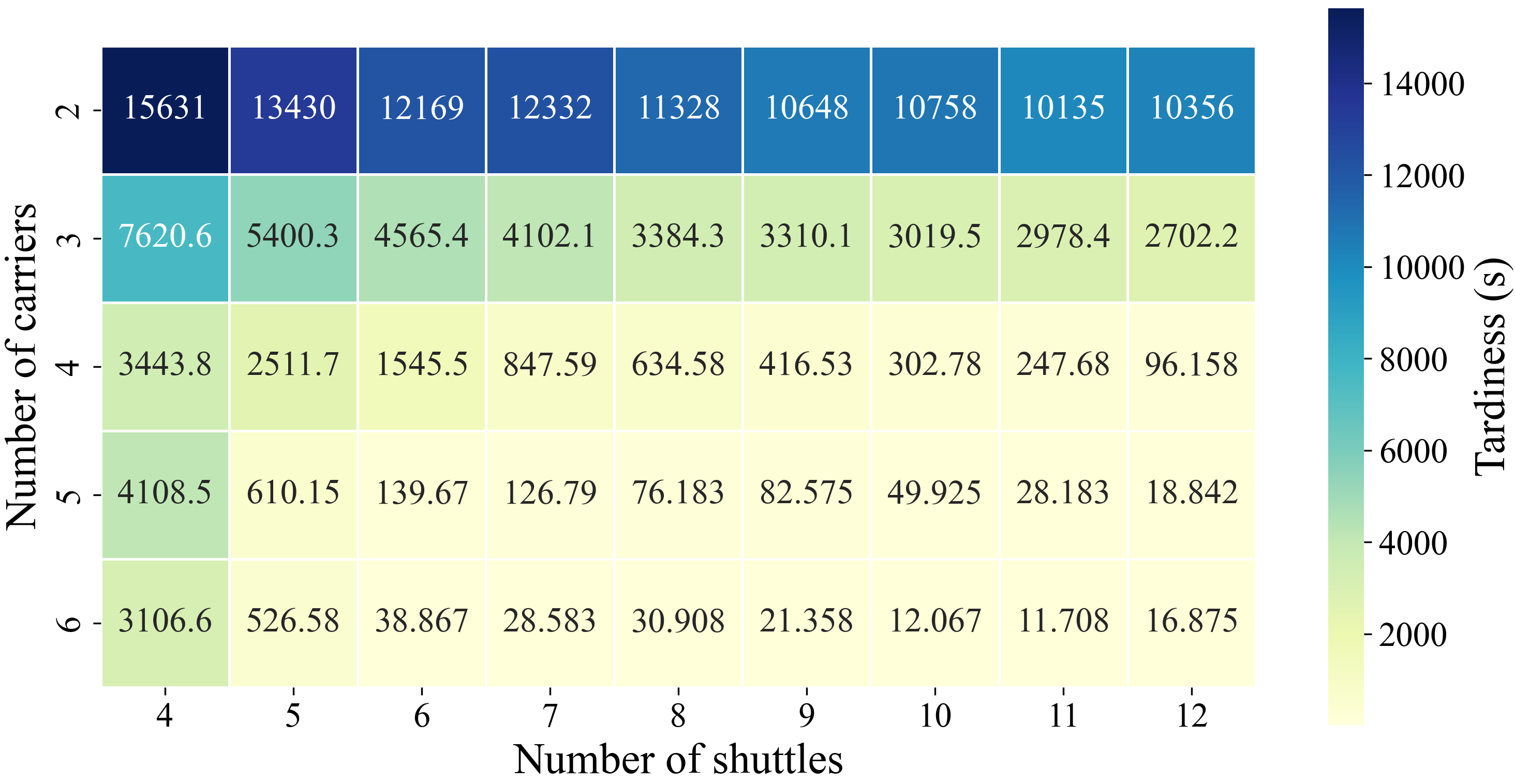}}
	\caption{Sensitivity heatmap for the configuration of carriers and shuttles.}
	\label{heatmap}
	\vspace{-10pt}
\end{figure}


\subsection{Generalizability and Scalability Analysis}
The proposed PN-based deadlock-free framework is not limited to the specific carrier--shuttle system or the industrial instance scales observed in the studied manufacturing system, but is applicable to a broader class of HRSP characterized by multi-agent synchronization.
Real-world examples include crane-based container terminals \cite{chen2020yard} and semiconductor transport systems \cite{chou2025multiagent}, where task execution necessitates the simultaneous collaboration of heterogeneous agents.
When applying the framework to such domains, the fundamental PN structure remains valid.
The main adaptation is to model a generic task requiring $k$ agents as a transition with $k$ input places.
Since the core logic of synchronization and coupled precedence is structurally invariant, the proposed deadlock-free theorems and algorithms remain applicable.

To further evaluate the scalability and generalizability of the proposed framework beyond the studied manufacturing system and its observed industrial scales, we extend the framework to a general HRSP with synchronization constraints \cite{fu2022robust}.
Synthetic HRSP datasets are generated with three robot types and fleet sizes of 2, 4, and 6.
Comparative results on larger-scale instances ranging from $n=10$ to $80$ are summarized in Table~\ref{tab:comparison_summary}.
Here, ALNS$_{\text{Base}}$ is the baseline ALNS without the PN modeling and BDS strategy, while ALNS$_{\text{PN}}$ is our proposed method.
As shown in Table~\ref{tab:comparison_summary}, incorporating the PN-based deadlock-free strategy substantially improves both solution quality and computational efficiency, especially for large-scale HRSP instances with tightly coupled dependencies.
Moreover, the benefit of the PN-based deadlock-free strategy becomes increasingly pronounced as the problem scale grows.
When the scale reaches $n=80$, the performance gap between $\textnormal{ALNS}_{\textnormal{PN}}$ and $\textnormal{ALNS}_{\textnormal{Base}}$ increases to 77.90\%, accompanied by a search efficiency improvement exceeding 350\%.
Furthermore, the developed ALNS demonstrates superior robustness compared to state-of-the-art metaheuristics.
While DABC and IG exhibit notable performance degradation on large-scale instances, our method consistently maintains the lowest RPDs.
Overall, these results validate both the scalability and the generalizability of the proposed framework, confirming the critical role of PN-based deadlock prevention in large-scale heterogeneous multi-robot scheduling with synchronization constraints.

\begin{table}[htbp]
	\centering
	\caption{Generalization comparison on the general HRSP under different problem scales.}
	\label{tab:comparison_summary}
	\renewcommand{\arraystretch}{1.1}
	\resizebox{\columnwidth}{!}{%
		\begin{tabular}{c cccc cc}
			\toprule
			\multirow{2}{*}{Scale} & \multicolumn{4}{c}{Avg. RPD (\%)} & \multicolumn{2}{c}{PN Improvement} \\
			\cmidrule(lr){2-5} \cmidrule(lr){6-7}
			& IG & DABC & ALNS$_{\text{Base}}$ & \textbf{ALNS$_{\text{PN}}$} & $\textnormal{G}_\textnormal{AP}$ (\%) & $\textnormal{A}_\textnormal{R}$ (\%) \\
			\midrule
			10 & 4.10 & 0.47 & 0.35 & \textbf{0.25} & 0.10 & 8.40 \\
			20 & 6.36 & 9.33 & 3.83 & \textbf{2.65} & 1.18 & 54.42 \\
			40 & 8.10 & 16.50 & 11.11 & \textbf{4.67} & 6.44 & 122.85 \\
			60 & 7.08 & 21.82 & 17.50 & \textbf{4.99} & 12.51 & 225.04 \\
			80 & 14.32 & 35.33 & 86.02 & \textbf{8.12} & 77.90 & 359.17 \\
			\bottomrule
		\end{tabular}%
	}
	\vspace{-5pt}
\end{table}

\section{Conclusion}\label{Conclusion}
This paper investigates a new scheduling problem involving attachable heterogeneous AGVs motivated by practical engineering applications.
To tackle the high combinatorial complexity and deadlock challenges induced by synchronized collaboration, we propose a PN-enhanced deadlock-free scheduling framework integrated into the ALNS algorithm.
PNs are constructed to map static permutation-based solutions into dynamic execution processes, enabling precise decoding via state evolution and proactive deadlock prevention through structural analysis.
Extensive experiments on real-world engineering data and synthetic general instances validate that the proposed framework significantly enhances computational efficiency and optimization performance.
Moreover, the developed ALNS outperforms the on-site scheduling policy, exact solvers, and state-of-the-art metaheuristics, particularly in solving large-scale problems.
Finally, operational analyses identify the carrier as the primary system bottleneck, yielding managerial insights for cost-effective fleet configuration.

Our future research plans to focus on two promising directions.
First, we aim to develop online, rolling-horizon rescheduling frameworks to enhance system robustness against stochastic uncertainties. 
Second, we plan to integrate learning-based techniques, such as DRL-guided operator selection and generative solution construction, to further enhance decision-making intelligence.

\bibliography{cas-refs}
\bibliographystyle{IEEEtran}

\clearpage

\section*{Supplementary Material}
\addcontentsline{toc}{section}{Supplementary Material}
\renewcommand{\theequation}{S.\arabic{equation}}
\setcounter{equation}{0}
\renewcommand{\thesection}{S.\Roman{section}}
\setcounter{section}{0}
\renewcommand{\thefigure}{S.\arabic{figure}}
\renewcommand{\thetable}{S.\arabic{table}}
\setcounter{figure}{0}
\setcounter{table}{0}
\section{Proof for Theorem~\ref{M_F_reach}} \label{proof_M_F_reach}


\vspace{5pt}
\noindent \textit{To prove Theorem \ref{M_F_reach}, we first introduce the following lemma:}

\begin{Lemma}
	In PN $\Sigma(\Pi)$, $\forall t \in \mathcal{T}$ can be fired at most once.
	\label{fire_once}
\end{Lemma}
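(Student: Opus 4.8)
The plan is to exploit that, for each AGV, the places and transitions associated with its task sequence form a simple directed path that initially carries exactly one token; the state (token-balance) equation then forces every transition on such a path to fire at most once. I would carry out the argument on the carrier chains only — the shuttle chains work identically, and neither the sink places $p_0^r$ nor the coupling between the carrier and shuttle chains is needed.

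First I would read off the relevant arc structure from Definition \ref{definition_petri_net} and Fig. \ref{solution_petri}: for a carrier $r\in\mathcal{V}^+$ with sequence $\pi_r=(\pi_1^r,\dots,\pi_{n_r}^r)$, every carrier place satisfies $(p_i^+)^\bullet=\{t_i\}$, while ${}^\bullet p_{\pi_1^r}^+=\emptyset$ and ${}^\bullet p_{\pi_\ell^r}^+=\{t_{\pi_{\ell-1}^r}\}$ for $2\le\ell\le n_r$; moreover $M_0(p_{\pi_1^r}^+)=1$ and $M_0(p_{\pi_\ell^r}^+)=0$ for $\ell\ge2$. Hence each carrier place has exactly one output transition and at most one input transition.

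Next I would fix an arbitrary feasible firing sequence $\sigma$ with $M_0[\sigma\rangle M$ and let $f(t)\ge0$ count the occurrences of $t$ in $\sigma$. Since all arc weights equal $1$, the state equation at a place $p$ gives $M(p)=M_0(p)+\sum_{t\in{}^\bullet p}f(t)-\sum_{t\in p^\bullet}f(t)\ge0$. Evaluating it at $p_{\pi_1^r}^+$ yields $f(t_{\pi_1^r})\le1$, and evaluating it at $p_{\pi_\ell^r}^+$ for $\ell\ge2$ yields $f(t_{\pi_\ell^r})\le f(t_{\pi_{\ell-1}^r})$; an immediate induction on $\ell$ gives $f(t_{\pi_\ell^r})\le1$ for all $1\le\ell\le n_r$. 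By the assignment constraints \eqref{c1_1} and \eqref{c1_3}, every task $i\in\mathcal{N}$ equals $\pi_\ell^r$ for a unique carrier $r$ and position $\ell$, so $f(t_i)\le1$; since $\sigma$ was arbitrary, no transition of $\Sigma(\Pi)$ fires more than once.

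The only delicate step is the first one: making $\mathcal{A}$ explicit enough to be sure that each carrier place is consumed only by $t_i$ and produced only by the transition of its predecessor task, so that the chain structure — and hence the induction — is valid. Everything after that is a routine place-invariant computation. (Equivalently, one could state the P-invariant $\sum_{p\in\{p_{\pi_1^r}^+,\dots,p_{\pi_{n_r}^r}^+,p_0^r\}}M(p)\equiv1$ and combine it with acyclicity of each chain, but the direct induction above is shorter and self-contained.)
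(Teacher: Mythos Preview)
Your argument is correct. The key structural observation---each AGV's chain of places forms a simple directed path with a single initial token at its head and no incoming arc there---is exactly what drives the paper's proof as well, so in essence the two proofs agree.

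The presentations differ, however. The paper argues by contradiction and \emph{backward} induction: if some $t$ fired twice, then each of its predecessor transitions must also have fired twice, and tracing predecessors eventually reaches a first-task transition $t_i$ with ${}^{\bullet}p_i^{+}=\emptyset$ or ${}^{\bullet}p_i^{-}=\emptyset$, which can only fire once. You instead work \emph{forward} and quantitatively, reading off the firing-count inequalities $f(t_{\pi_1^r})\le 1$ and $f(t_{\pi_\ell^r})\le f(t_{\pi_{\ell-1}^r})$ directly from the nonnegativity of the state equation at each carrier place. Your route is arguably cleaner: it avoids contradiction, makes the token-conservation mechanism explicit, and yields the stronger intermediate fact that firing counts are monotone non-increasing along each chain. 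The paper's version is slightly more informal (the sentence ``$t$ is enabled iff all its direct predecessor transitions \ldots\ are fired'' glosses over the distinction between initially marked places and places fed by a predecessor), whereas your state-equation bookkeeping handles both cases uniformly. Your own caveat about making $\mathcal{A}$ explicit is well placed, since Definition~\ref{definition_petri_net} leaves the arc set implicit; once that is pinned down as you describe, the remainder is routine.
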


\begin{proof}[\textbf{Proof of Lemma \ref{fire_once}}]
	The proof is established by contradiction. 
	Assume that there exists a transition $t_i \in \mathcal{T}$ that fires more than once.
	Based on the firing rule, the re-firing of $t_i$ necessitates token replenishment in its input places.
	Let $\pi_r$ be a route containing task $i$, and let $k$ denote the first task in $\pi_r$, i.e., $(0,k)\in \Pi^+$ or $(0,k)\in \Pi^-$.
	Since $\pi_r$ is a finite sequence starting from the virtual depot $0$, this requirement for repeated token supply recursively propagates backward, implying the re-firing of transition $t_k$ associated with the first task $k$.
	According to the $Post$ function defined in Definition \ref{definition_petri_net}, a token can be deposited into an input place $p_k^+$ or $p_k^-$ of $t_k$ only if there exists a predecessor $j$ such that $(j,k)\in \Pi^+$ or $(j,k)\in \Pi^-$.
	As $k$ is the first task of the route, its only predecessor in $\pi_r$ is depot $0$.
	Since the transition set is defined as $\mathcal{T} = \{t_a \mid a \in \mathcal{N}\}$ and depot $0 \notin \mathcal{N}$, no transition corresponding to node $0$ exists in $\Sigma(\Pi)$.
	Therefore, the input place associated with route $\pi_r$ has no input transitions, i.e., ${}^{\bullet}p_k^+ = \emptyset$ (if $r \in \mathcal{V}^+$) or ${}^{\bullet}p_k^- = \emptyset$ (if $r \in \mathcal{V}^-$).
	Since these places receive tokens exclusively from the initial marking $M_0$ and possess no input transitions for replenishment, $t_k$ can be fired at most once.
	This contradicts the implication that the dependency chain requires $t_k$ to re-fire.
	Thus, the assumption is invalid, and $\forall t \in \mathcal{T}$ can be fired at most once.
\end{proof}

\vspace{10pt}
\begin{proof}[\textbf{Proof of Theorem \ref{M_F_reach}}]
	The proof consists of two parts: necessity and sufficiency.
	
	\vspace{5pt}
	\noindent \textbf{1) Necessity}
	
	The necessity is proved by contradiction.
	Assume that $\Sigma(\Pi)$ reaches $M_\textnormal{F}$, yet $\exists t_k \in \mathcal{T}$ remains unfired.
	Consider the route $\pi_r$ containing task $k$. Let $z$ be the terminal task of this route.
	According to the $Post$ function in Definition \ref{definition_petri_net}, the virtual depot place $p_0^r$ receives a token exclusively from $t_z$ (i.e., $Post(t_z, p_0^r)=1$).
	Since $M_\textnormal{F}(p_0^r)=1$ by Definition \ref{terminal_marking}, $t_z$ must have fired.
	If $k=z$, this contradicts the assumption immediately.
	If $k \neq z$, we trace the task sequence in $\pi_r$ downstream from $k$.
	By Definition \ref{definition_petri_net}, for any consecutive tasks $(u, v) \in \pi_r$, the firing of $t_v$ strictly requires a token in $p_v^+$ (or $p_v^-$) supplied by $t_u$.
	Consequently, if a predecessor $t_u$ remains unfired, its successor $t_v$ is permanently disabled.
	By induction, the unfired status of $t_k$ propagates to $t_z$, implying $t_z$ cannot fire.
	This results in $M(p_0^r)=0$, which contradicts the definition of $M_\textnormal{F}$.
	
	Thus, all transitions must fire.
	Combined with Lemma \ref{fire_once}, we conclude that every transition fires exactly once.
	
	\vspace{10pt}
	\noindent \textbf{2) Sufficiency}
	
	Assume every transition $t \in \mathcal{T}$ fires exactly once to reach a marking $M^*$.
	For any place $p_0^r \in \mathcal{P}_0^+ \cup \mathcal{P}_0^-$, Definition \ref{definition_petri_net} dictates that $M_0(p_0^r)=0$ and that $p_0^r$ possesses a unique input transition but no output transition.
	Therefore, the single firing of its unique input transition deposits exactly one token, yielding $M^*(p_0^r)=1$.
	For any process place $p \in \mathcal{P}^+ \cup \mathcal{P}^-$, Definition \ref{definition_petri_net} designates a unique output transition $t$.
	If $M_0(p)=1$, $p$ possesses no input transition. 
	Thus, the firing of $t$ strictly consumes the initial token.
	If $M_0(p)=0$, $p$ possesses exactly one input transition.
	The unique predecessor fires exactly once, depositing a token which is subsequently consumed by the firing of $t$.
	In both cases, $M^*(p)=0$.
	By Definition \ref{terminal_marking}, the resulting marking $M^*$ satisfies the exact conditions of the final marking, proving $M^* = M_\textnormal{F}$.
\end{proof}

\section{Proof for Theorem~\ref{deadlock_feasible}} \label{proof_deadlock_feasibility}


\begin{proof}[\textbf{Proof of Theorem \ref{deadlock_feasible}}]
	The proof establishes the necessary and sufficient conditions separately.
	
	\vspace{5pt}
	\noindent \textbf{1) Necessity}
	
	The necessity is proved by contradiction.
	Assume that solution $\Pi$ is feasible, yet the PN $\Sigma(\Pi)$ encounters a deadlock at a reachable marking $M_\textnormal{D} \in \mathcal{R}(M_0) \setminus \{M_\textnormal{F}\}$.
	
	Let $\mathcal{T}_\textnormal{D} \subseteq \mathcal{T}$ denote the set of transitions that have not fired at $M_\textnormal{D}$.
	Since $M_\textnormal{D} \neq M_\textnormal{F}$, by Theorem~\ref{M_F_reach}, $\mathcal{T}_\textnormal{D} \neq \emptyset$.
	Because $M_\textnormal{D}$ is a deadlock marking, every transition in $\mathcal{T}_\textnormal{D}$ is disabled.
	Consequently, for any $t_j \in \mathcal{T}_\textnormal{D}$, there exists at least one input place $p \in {}^{\bullet}t_j$ such that $M_\textnormal{D}(p) = 0$.
	If $p$ were a source place (i.e., $M_0(p)=1$), the initial token would remain in $p$ because $t_j$ has not fired. 
	This implies $M_\textnormal{D}(p)=1$, contradicting the fact that $p$ is empty.
	Therefore, $p$ must be an intermediate place with $M_0(p)=0$. 
	By Definition~\ref{definition_petri_net}, such a place possesses a unique input transition, denoted as $t_i \in {}^{\bullet}p$.
	
	Now, consider the state of this predecessor $t_i$.
	If $t_i$ had already fired, it would have deposited a token into $p$. Since $t_j$ has not fired to consume it, $p$ would hold a token ($M_\textnormal{D}(p)=1$), which again contradicts the emptiness of $p$.
	Thus, the predecessor $t_i$ must also be unfired, i.e., $t_i \in \mathcal{T}_\textnormal{D}$.
	By Definition~\ref{definition_petri_net}, the relationship $t_i \in {}^{\bullet}p$ and $p \in {}^{\bullet}t_j$ implies a precedence relation $(i,j) \in \Pi$, i.e., $i \prec j$.
	Since no task can be its own immediate predecessor in the route, $t_i \neq t_j$.
	
	In summary, for every $t_j \in \mathcal{T}_\textnormal{D}$, there exists a distinct predecessor $t_i \in \mathcal{T}_\textnormal{D}, t_i \neq t_j$ such that $i \prec j$.
	Since $\mathcal{T}_\textnormal{D}$ is a finite set, tracing these precedence dependencies must eventually revisit a transition (by the pigeonhole principle), forming a closed loop $t_{k} \prec \dots \prec t_{k}$.
	This implies the existence of a circular dependency (circular wait) in $\Pi$, contradicting the assumption that $\Pi$ is a feasible solution.
	Therefore, if $\Pi$ is feasible, $\Sigma(\Pi)$ must be deadlock-free at any reachable marking $M \in \mathcal{R}(M_0) \setminus \{M_\textnormal{F}\}$.
	
	\vspace{10pt}
	\noindent \textbf{2) Sufficiency}
	
	The sufficiency is proved by contraposition. 
	Assuming solution $\Pi$ is infeasible, we demonstrate that there exists a reachable marking $M_\textnormal{D} \in \mathcal{R}(M_0) \setminus \{M_\textnormal{F}\}$ that constitutes a deadlock.
	
	The infeasibility of $\Pi$ implies the existence of a set of tasks $\mathcal{N}_\mathcal{C} \subseteq \mathcal{N}$ forming a circular precedence relationship.
	By Definition~\ref{definition_petri_net}, the chain of precedence relations in this cycle maps to a structural loop of places and transitions in $\Sigma(\Pi)$, denoted as $\mathcal{C}$.
	Let $\mathcal{P}_\mathcal{C}$ and $\mathcal{T}_\mathcal{C}$ denote the sets of places and transitions constituting this loop, respectively.
	Consider any place $p \in \mathcal{P}_\mathcal{C}$.
	By the definition of a loop, $p$ connects two transitions $t_i, t_j \in \mathcal{T}$.
	Let $t_i$ denote the unique input transition of $p$ and $t_j$ denote its output transition, implying $i \prec j$.
	
	By Definition \ref{definition_petri_net}, the transition set is defined as $\mathcal{T} = \{ t_k \mid k \in \mathcal{N} \}$.
	Since the virtual depot $0 \notin \mathcal{N}$, the existence of $t_i \in \mathcal{T}$ strictly implies $i \neq 0$.
	According to the definition of $M_0$ in Definition~\ref{definition_petri_net}, tokens are initially assigned exclusively to places connected to the depot (i.e., where $(0, i) \in \Pi$).
	Consequently, $p$ receives no initial token, i.e., $M_0(p)=0$.
	It follows that the loop $\mathcal{C}$ holds no tokens at $M_0$, satisfying $M_0(p) = 0$ for all $p \in \mathcal{P}_\mathcal{C}$.
	
	Structurally, within the closed loop $\mathcal{C}$, the firing of any transition $t \in \mathcal{T}_\mathcal{C}$ strictly requires a token from its specific input place $p \in \mathcal{P}_\mathcal{C}$.
	However, places in $\mathcal{P}_\mathcal{C}$ can receive tokens exclusively from the firing of predecessor transitions within the same loop $\mathcal{C}$.
	Given that $\mathcal{C}$ initially contains no tokens ($M_0(p)=0, \forall p \in \mathcal{P}_\mathcal{C}$), the enabling condition for any transition in $\mathcal{T}_\mathcal{C}$ can never be satisfied.
	Consequently, every transition $t \in \mathcal{T}_\mathcal{C}$ remains permanently disabled.
	
	Since the transitions in $\mathcal{T}_\mathcal{C}$ never fire, the condition of Theorem~\ref{M_F_reach} ($\forall t \in \mathcal{T}$ fires exactly once) is violated, rendering $M_\textnormal{F}$ unreachable.
	By Lemma \ref{fire_once}, since each transition fires at most once, the system is free of livelocks and the state evolution must eventually terminate.
	Given that $M_\textnormal{F}$ is unreachable, this termination state must be a deadlock marking $M_\textnormal{D} \in \mathcal{R}(M_0) \setminus \{M_\textnormal{F}\}$.
\end{proof}

\section{Proof for Algorithm 2} \label{Algorithm_proof}

\begin{proof}[\textbf{Proof of Algorithm 2}]
	The proof establishes the soundness and completeness of the BDS algorithm.
	
	\vspace{5pt}
	\noindent \textbf{1) Soundness}
	
	To prove soundness, we demonstrate that any position identified by BDS leads to a circular dependency if selected.
	
	Consider any index $\overleftarrow{\epsilon}^-(b) \in \mathcal{L}_{\textnormal{B}}$ identified by the backward search.
	By the design of the breadth-first search in Algorithm 2, the inclusion of $b$ implies the existence of a directed path from transition $t_b$ to $t_k$ in $\Sigma(\Pi)$.
	Structurally, this path signifies an existing precedence relation $b \prec \dots \prec k$.
	Inserting task $k$ at position $\overleftarrow{\epsilon}^-(b)$ (i.e., immediately preceding $b$ in $\Pi^-$) imposes a new direct precedence $k \prec b$.
	Combining these relations yields a closed loop $b \prec \dots \prec k \prec b$, constituting a circular wait.
	Thus, position $\overleftarrow{\epsilon}^-(b)$ is infeasible.
	Symmetrically, for any index $\overrightarrow{\epsilon}^-(f) \in \mathcal{L}_{\textnormal{F}}$ identified by the forward search, there exists a directed path from $t_k$ to $t_f$, implying $k \prec \dots \prec f$.
	Inserting task $k$ at $\overrightarrow{\epsilon}^-(f)$ (i.e., immediately succeeding $f$) imposes $f \prec k$.
	This forms a closed loop $k \prec \dots \prec f \prec k$, rendering the position infeasible.
	
	Consequently, every position in $\mathcal{L}_{\textnormal{B}} \cup \mathcal{L}_{\textnormal{F}}$ leads to a circular wait, proving the soundness of BDS.
	
	\vspace{10pt}
	\noindent \textbf{2) Completeness}
	
	To prove completeness, we demonstrate that any infeasible insertion position is necessarily identified by BDS.
	
	Assume the insertion of task $k$ at position $\overleftarrow{\epsilon}^-(b)$ is infeasible.
	This implies that the introduced precedence $k \prec b$ completes a circular dependency.
	Given that the solution prior to insertion was feasible (acyclic), the remainder of this cycle must consist of a pre-existing chain of dependencies from $b$ to $k$, i.e., $b \prec \dots \prec k$.
	In PN $\Sigma(\Pi)$, this dependency maps to a directed structural path from transition $t_b$ to $t_k$.
	Since BDS employs a backward breadth-first search to exhaustively traverse all predecessors of $t_k$, it inevitably visits $t_b$ and appends $\overleftarrow{\epsilon}^-(b)$ to the set $\mathcal{L}_{\textnormal{B}}$.
	Symmetrically, if inserting at $\overrightarrow{\epsilon}^-(f)$ is infeasible, the induced precedence $f \prec k$ closes a cycle with an existing chain $k \prec \dots \prec f$, corresponding to a directed path from $t_k$ to $t_f$ in $\Sigma(\Pi)$.
	The forward breadth-first search in BDS guarantees the identification of $t_f$, thereby including $\overrightarrow{\epsilon}^-(f)$ in $\mathcal{L}_{\textnormal{F}}$.
	
	Consequently, BDS identifies all infeasible positions, establishing its completeness.
\end{proof}

\section{Firing process}
The FDD process for the example solution $\Pi$ given in Section IV-A is represented by the firing sequence $M_0 [ t_5 \rangle M_1 [ t_6 \rangle M_2 [ t_1 \rangle M_3 [ t_4 \rangle M_4 [ t_3 \rangle M_5 [ t_7 \rangle M_6 [ t_2 \rangle M_\textnormal{F}$.
The firing process of PN $\Sigma(\Pi)$ is visualized in Fig. \ref{firing_process}.

\begin{figure*}[ht]
	\centering
	\includegraphics[width=0.95\linewidth]{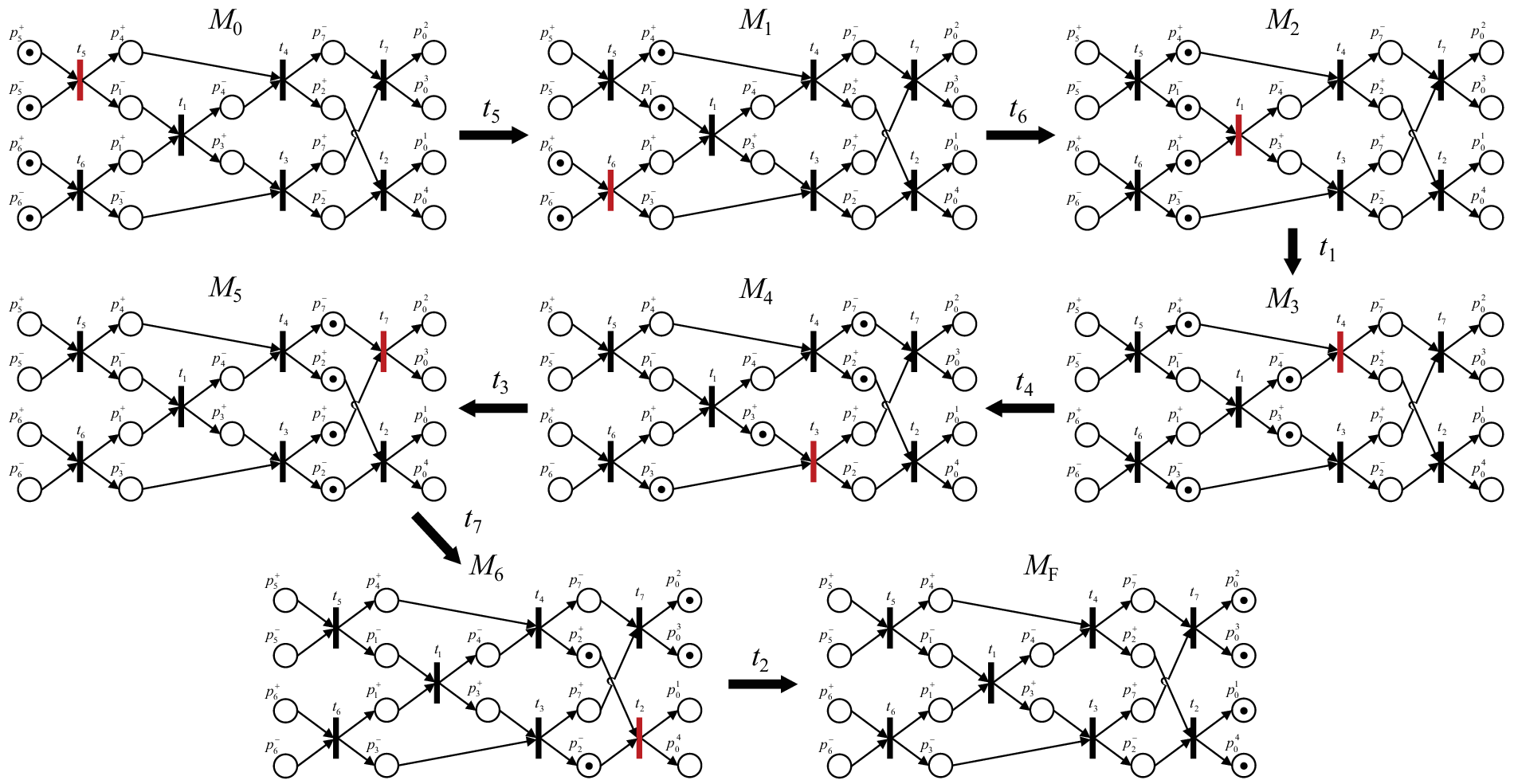}
	\caption{Firing process of PN $\Sigma(\Pi)$.}
	\label{firing_process}
\end{figure*}

\end{document}